\DeclarePairedDelimiter{\abs}{\lvert}{\rvert}
\DeclarePairedDelimiterXPP{\f}[2]{\foperator{#1}}(){}{#2}
\DeclarePairedDelimiter{\floor}{\lfloor}{\rfloor}
\newcommand{\foperator}[1]{\mathop{{#1}\empty{}}}
\newcommand{\N}{\mathbb{N}}
\DeclarePairedDelimiter{\norm}{\lVert}{\rVert}
\DeclarePairedDelimiterXPP{\oh}[1]{o}(){}{#1}
\DeclarePairedDelimiterXPP{\Oh}[1]{O}(){}{#1}
\newcommand{\Q}{\mathbb{Q}}
\newcommand{\R}{\mathbb{R}}
\renewcommand{\S}{\mathbb{S}}
\DeclarePairedDelimiter{\set}{\{}{\}}
\DeclarePairedDelimiterX{\setm}[2]{\{}{\}}{#1 \colon \mathopen{}#2}
\newcommand{\Z}{\mathbb{Z}}
\newcommand{\Alg}{\mathbb{A}}
\newtheorem{theorem}{Theorem}
\newtheorem{corollary}[theorem]{Corollary}
\newtheorem*{theorem*}{Theorem}
\newtheorem*{claim*}{Claim}
\newtheorem{lemma}{Lemma}[section]
\newtheorem{proposition}[lemma]{Proposition}
\theoremstyle{remark}
\newtheorem{remark}[lemma]{Remark}
\numberwithin{equation}{section}
\title{Decidability and $k$-Regular Sequences}
\author{Daniel Krenn}
\address{Fachbereich Mathematik,
Paris Lodron University of Salzburg,
Hellbrunnerstra{\ss}e 34,
5020 Salzburg,
Austria}
\email{math@danielkrenn.at}
\thanks{Daniel Krenn is supported by the Austrian Science Fund (FWF): P\,28466-N35.}
\author{Jeffrey Shallit}
\address{
School of Computer Science,
University of Waterloo,
Waterloo, ON  N2L 3G1,
Canada}
\email{shallit@uwaterloo.ca}
\thanks{This work was partially carried out by the authors during the workshop ``Numeration and Substitution 2019''
at the
Erwin Schr\"odinger International Institute for Mathematics and Physics (ESI), and they were supported by it.}
\thanks{This paper is published under a
  Creative Commons Attribution 4.0 International License.
  \raisebox{-0.5ex}{\includegraphics[height=1em]{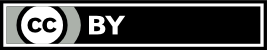}}
  (\url{http://creativecommons.org/licenses/by/4.0/})
  Copyright held by the authors.}
\begin{document}

\maketitle

\begin{abstract}
  In this paper we consider a number of natural decision problems
  involving $k$-regular sequences.
  Specifically, they arise from considering
  \begin{itemize}
  \item lower and upper bounds on growth rate;
  in particular  boundedness,
  \item images,
  \item regularity (recognizability by a deterministic finite automaton) of
  preimages, and
  \item factors, such as squares and palindromes,
  \end{itemize}
  of such sequences.
  We show that these decision problems are undecidable.
\end{abstract}

\section{Introduction}

A sequence $(a(n))_{n \geq 0}$ over a finite alphabet
is said to be {\it $k$-automatic}, for $k \geq 2$ an integer,
if its $k$-kernel
$$ K_k({(a(n))_{n \geq 0}})  = \setm{ (a(k^e n + i))_{n \geq 0} }{ e \geq 0,\ 0 \leq i < k^e }$$
is of finite cardinality.  There are many different equivalent definitions of this class of sequences 
\cite{Allouche-Shallit:1992:regular-sequences}.
It is well-known that many questions about these sequences, such as the growth rate
of $\sum_{0 \leq n < N} a(n)$, are decidable \cite{Cobham:1972}.
\vskip .1in

The so-called {\it $k$-regular sequences\/} form a natural
generalization of the automatic sequences.  These are sequences
$(a(n))_{n \geq 0}$ where the kernel $K_k ((a(n))_{n \geq 0}) $ is contained in a finitely
generated module.  Unlike the case of $k$-automatic sequences, it is
known that some decision problems involving $k$-regular sequences are
recursively unsolvable \cite{Allouche&Shallit:2003}.

In this paper we examine a number of natural decision problems involving $k$-regular sequences, and show that they are undecidable.

Some general results have recently been obtained by Honkala~\cite{Honkala:2021:quasi-universal-regular-sequences}.

\subsection{Recursively solvable decision problems}

A decision problem is one with a yes/no answer.
To say that a decision problem is solvable means
there exists an algorithm (or Turing machine) that
unerringly solves it on all inputs.  
Throughout this paper we use the terms ``recursively solvable'',
``solvable'', and ``decidable'' interchangeably, and similarly for the
terms ``recursively unsolvable'', ``unsolvable'', and ``undecidable''.

\subsection{Notation}

We let $\N_0$ denote the nonnegative integers (natural numbers) and $\N$ denote the
positive integers.

For a word $z$ with symbols chosen from a finite set~$D$, we let 
$\abs{z}$ denote its length and $\abs{z}_d$ the number of occurrences of
the letter~$d \in D$ in $z$.

For a fixed integer~$k \geq 2$, we consider base-$k$
representations with the usual digit set $D=\set{0,1,\dots,k-1}$.
For a nonnegative integer~$n$, we write $(n)_k$ for the standard
$k$-ary representation of~$n$, most significant digit first, and having no leading zeroes.
The representation of $0$ is the empty word.
Note that $(n)_k$ is a word over~$D$ and
that $\abs{(n)_k}=\floor{\log_k n}+1$ for $n > 0$.

\subsection{Hilbert's tenth problem}

Showing that a certain decision problem is recursively unsolvable is often carried out
by constructing a (Turing) reduction from another decision problem 
already known to be recursively unsolvable. One such problem is
Hilbert's tenth problem \cite{Davis-Putnam-Robinson:1961:exp-diophantine-equations,
  Matijasevic:1970:diophantiness-enumerable-sets}:
\begin{theorem*}[Hilbert's tenth problem; variant]
  The decision problem
  \begin{quote}
    ``Given a multivariate polynomial $p$ with integer coefficients,
    do there exist natural numbers $x_1$, $x_2$, \dots, $x_t$ such that $p(x_1,
    \dots, x_t) = 0$?''
  \end{quote}
  is recursively unsolvable.
\end{theorem*}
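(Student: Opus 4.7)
The plan is to reduce from the halting problem via the DPRM theorem (Davis--Putnam--Robinson--Matijasevich): every recursively enumerable set $S \subseteq \N^n$ is Diophantine, meaning there exists a polynomial $q \in \Z[x_1,\dots,x_n,y_1,\dots,y_m]$ such that $(a_1,\dots,a_n) \in S$ if and only if there exist $b_1,\dots,b_m \in \N$ satisfying $q(a_1,\dots,a_n,b_1,\dots,b_m)=0$. Granting DPRM, I would fix any r.e.\ but non-recursive set (for instance the halting set $\{\langle M \rangle : M \text{ halts on empty input}\}$) and feed its Diophantine representation into the decision algorithm hypothesised by Hilbert's tenth problem; this would decide halting, a contradiction. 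The variant stated here with $\N$ in place of $\Z$ poses no new difficulty: Lagrange's four-square theorem lets one replace each nonnegative-integer variable by $a^2+b^2+c^2+d^2$, while any integer variable can be written as a difference $a-b$ of two natural numbers, so the $\N$-version and $\Z$-version of the problem are computably inter-reducible.

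The real work is DPRM itself, which I would carry out in four stages. \emph{First}, I would verify that elementary arithmetic relations (ordering, divisibility, congruence, primality) are Diophantine by writing explicit witness polynomials. \emph{Second}, and crucially, I would establish Matijasevich's theorem that the exponential relation $z = y^x$ is Diophantine; the standard route is via the Pell equation $u^2 - (a^2-1)v^2 = 1$, whose solution pairs $(u_n,v_n)$ grow like $\alpha^n$ with $\alpha = a + \sqrt{a^2-1}$, together with congruential identities that Diophantinely pin down the index $n$ from the pair. \emph{Third}, I would use exponentiation to Diophantinely define binomial coefficients, factorials, and (via Gödel's $\beta$-function style coding) bounded universal quantifiers, that is, Robinson's lemma. \emph{Fourth}, with bounded universal quantification available, I would arithmetise Turing-machine computation by encoding a halting computation of length $T$ as a single large integer in a base exceeding every quantity occurring during the run, and converting the assertion ``this integer encodes a valid halting computation'' into a polynomial system.

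The main obstacle, by a wide margin, is the second stage: exhibiting a Diophantine relation of genuinely exponential growth. Without it, Diophantine sets are confined to polynomially-bounded ones and the whole enterprise collapses. The key insight, due to Matijasevich building on Julia Robinson's earlier programme, is that the Pell sequence furnishes both the required exponential growth and enough polynomial structure (a second-order linear recurrence, plus a divisibility lattice satisfying $v_m \mid v_n \iff m \mid n$) to characterise each term using only finitely many polynomial equations. Once this foothold is secured, steps three and four become essentially mechanical bookkeeping, and the reduction from the halting problem to polynomial solvability in natural numbers goes through.
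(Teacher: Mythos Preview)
The paper does not actually prove this statement: it is quoted as a known result, with citations to Davis--Putnam--Robinson and Matijasevi\v{c}, and then used as a black box throughout. So there is no ``paper's own proof'' to compare against.

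That said, your outline is a faithful high-level sketch of exactly the DPRM argument those citations point to: show every r.e.\ set is Diophantine, then specialise to a non-recursive r.e.\ set. Your four-stage decomposition (elementary relations; exponentiation via Pell; binomials, factorials and bounded universal quantification; arithmetisation of Turing computations) is the standard route, and your identification of the Diophantine definability of exponentiation as the crux is correct. The remark on inter-reducibility of the $\N$- and $\Z$-versions via Lagrange's four-square theorem and differences is also right, and is precisely why the paper can freely switch to the nonnegative-integer variant. Nothing in your proposal is wrong; it simply supplies far more detail than the paper, which treats the result as background.
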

The analogous problem, where the
$x_i$ need to be positive, is also recursively unsolvable.
We will reduce from this problem quite frequently, namely in
Theorems~\ref{thm:unsolvable:Omega-n},
\ref{thm:unsolvable:preimage-dfa}, \ref{thm:unsolvable:image-N},
\ref{thm:unsolvable:image-Z} and \ref{thm:unsolvable:value-twice}.

\subsection{Representation of \texorpdfstring{$k$}{k}-regular sequences}
A $k$-regular sequence $(f(n))_{n \geq 0}$ can be finitely represented in a number of different ways, of which two are the most useful.  First, a set of identities in terms of sequences in the $k$-kernel, where each identity represents a subsequence $(f(k^e n + i))_{n \geq 0}$, $0 \leq i < k^e$, as a linear combination of subsequences in the $k$-kernel, and a set of initial values.  Together it must be possible to compute $f(n)$ for all $n$ from this set of identities and initial values. 

Second, a {\it linear representation\/} for $(f(n))_{n\geq 0}$, which consists of a $1 \times r$ row vector $v$, an $r \times 1$ column vector $w$,
and $k$ square matrices $M_0, M_1, \ldots, M_{k-1}$ of dimension
$r \times r$ such that
$$ f(n) = v M_{n_0} \cdots M_{n_{s-1}} w,$$
for all $n$, where $(n)_k = n_{s-1} \cdots n_0$.  Of course, the empty product of matrices is the identity matrix.  The \textit{rank} of a linear representation is $r$.   
See \cite[Theorem~2.2]{Allouche-Shallit:1992:regular-sequences}. 
A linear representation is \textit{minimal} if it has smallest possible
rank for the corresponding sequence.

For example, consider the $2$-regular sequence $(s_2(n))_{n\ge0}$, which counts the number of $1$'s
in the binary representation of $n$.  Then it is easy to see that
\begin{align*}
s_2 (0) &= 0 \\
s_2 (2n) &= s_2 (n) \\
s_2 (4n+1) &= s_2 (2n+1) \\
s_2 (4n+3) &= -s_2 (n) + 2 s_2 (2n+1)
\end{align*}
is an example of the former
representation, and
$$
v  = [ 0 \ \ 1 ]; \quad
M_0 = \left[ \begin{array}{cc}
    1 & 0 \\
    0 & 1 
    \end{array} \right]; \quad
M_1 = \left[ \begin{array}{cc}
    0 & -1 \\
    1 & 2
    \end{array} \right]; \quad
w = \left[ \begin{array}{c}
    1 \\
    0 \end{array} \right] 
$$
is an example of the latter.

From now on, when we say an algorithm is ``given'' a $k$-regular sequence as input, we actually mean that the input is either one of these two representations.
Note that we can transform between these two representations effectively, that is, with an algorithm \cite{Berstel-Reutenauer:2011:noncommutative-rational-series}.

Some of our theorems involve algebraic numbers.   When we say we are ``given'' a real algebraic number $\alpha$, we mean we are given the minimal polynomial for $\alpha$, together with a rational interval that contains $\alpha$ and none of its conjugates.  As is well-known \cite{Froehlich-Shepherdson:1956:effective-proc-field-theory, Griffor:1999:handbook-computability-theory},
we can effectively carry out arithmetic on algebraic numbers represented in this way.

\subsection{Closure properties of \texorpdfstring{$k$}{k}-regular sequences}

In this section we recall some closure properties of $k$-regular
sequences: that is, which operations on sequences preserve the property of
$k$-regularity.  For more details, see
\cite{Allouche-Shallit:1992:regular-sequences}.  It is important to
note that not only do these operations preserve $k$-regularity; they
also are {\it effectively\/} $k$-regular.  Let $\circ$ be some operation
on sequences.  By the operation $\circ$ being ``effectively
$k$-regular'', we mean that there is an algorithm that, given
some representation of $k$-regular sequences
${\bf a} = (a(n))_{n \geq 0}$ and ${\bf b} = (b(n))_{n \geq 0}$,
computes a representation for ${\bf a} \circ {\bf b}$.

\begin{theorem*}
The class of $k$-regular sequences is (effectively) closed under the following operations:
\begin{itemize}
    \item[(a)] sum, ${\bf a}+{\bf b} = (a(n) + b(n))_{n \geq 0}$;
    \item[(b)] product,
    ${\bf a} {\bf b} = (a(n) \, b(n))_{n \geq 0}$;
    \item[(c)] convolution,
    ${\bf a} \star {\bf b} =
    (\sum_{0 \leq i \leq n} a(i) \, b(n-i) )_{n \geq 0}$;
    \item[(d)] perfect shuffle,
    ${\bf a} \,\sha\, {\bf b} =
    {\bf c} = (c(n))_{n \geq 0}$,
    where $c(2i) = a(i)$ and
    $c(2i+1) = b(i)$ for $i \geq 0$.   The same is true for
    $t$-way perfect shuffle, where we combine $t$ sequences analogously.
\end{itemize}
\end{theorem*}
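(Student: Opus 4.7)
The plan is to give an effective construction for each operation, using linear representations for the sum and product and working at the level of the $k$-kernel for the convolution and perfect shuffle; each construction will be manifestly algorithmic, which delivers the effective version. For the sum~(a), given linear representations $(v_a,(M_d^a)_d,w_a)$ and $(v_b,(M_d^b)_d,w_b)$ of~${\bf a}$ and~${\bf b}$ of ranks $r_a$ and $r_b$, I would take the block direct sum: matrices $M_d$ block-diagonal with blocks $M_d^a$ and $M_d^b$, row vector obtained by concatenating $v_a$ and $v_b$, and column vector obtained by stacking $w_a$ above $w_b$, of total rank $r_a+r_b$. A short expansion confirms $vM_{n_0}\cdots M_{n_{s-1}}w=a(n)+b(n)$. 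For the product~(b) I would use Kronecker products throughout: $v=v_a\otimes v_b$, $M_d=M_d^a\otimes M_d^b$, $w=w_a\otimes w_b$, of rank $r_a r_b$. The mixed-product identity $(A\otimes B)(C\otimes D)=(AC)\otimes(BD)$, combined with the fact that the Kronecker product of scalars is ordinary multiplication, immediately yields $vM_{n_0}\cdots M_{n_{s-1}}w=a(n)b(n)$.

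The convolution~(c) will be the main obstacle, since the summation index in $c(n)=\sum_{i=0}^n a(i)\,b(n-i)$ does not split cleanly along the base-$k$ digits of~$n$. My plan is to partition the sum by $i\bmod k$: writing $i=kq+r$ with $0\le r<k$ and fixing $0\le j<k$, each term $b(kn+j-i)$ becomes either $b(k(n-q)+(j-r))$ when $r\le j$ or $b(k(n-q-1)+(k+j-r))$ when $r>j$. This exhibits $(c(kn+j))_{n\ge 0}$ as a finite linear combination of pairwise convolutions of subsequences from $K_k({\bf a})$ and $K_k({\bf b})$, plus a bounded boundary correction that absorbs the $n-1$ shift. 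Iterating this relation I would verify that $K_k({\bf c})$ lies in the finitely generated module spanned by all pairwise convolutions of a generating set of $K_k({\bf a})$ with one of $K_k({\bf b})$, giving $k$-regularity of~${\bf c}$ with rank bounded essentially by $r_a r_b$.

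For the perfect shuffle~(d) in the case $t=k=2$, I would work directly with the $2$-kernel. For any $e\ge 1$ and $0\le i<2^e$, writing $i=2i'+i_0$ with $i_0\in\{0,1\}$ gives $c(2^e n+i)=a(2^{e-1}n+i')$ when $i_0=0$ and $c(2^e n+i)=b(2^{e-1}n+i')$ when $i_0=1$. Hence $K_2({\bf c})\subseteq\{{\bf c}\}\cup K_2({\bf a})\cup K_2({\bf b})$, which is contained in the finitely generated module spanned by~${\bf c}$ together with generators of the two kernels. The $t$-way version, with $t$ compatible with base-$k$ expansion, is handled identically by partitioning on $i\bmod t$ into $t$ cases.
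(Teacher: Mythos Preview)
The paper does not give its own proof of this statement at all; it simply writes ``For proofs, see \cite{Allouche-Shallit:1992:regular-sequences}.'' So there is nothing in the paper to compare your argument against, and the question reduces to whether your constructions are correct.

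Your treatments of (a) and (b) via block sums and Kronecker products are correct and are the standard ones. Your plan for~(c) is also sound: splitting $\sum_i a(i)b(kn+j-i)$ according to $i\bmod k$ does exhibit $c(kn+j)$ as a finite combination of convolutions of level-$1$ kernel subsequences, evaluated at $n$ or $n-1$. One small correction: the identity $(a'\star b')(n-1)=(a'\star b')(n)-a'(n)b'(0)$ shows the boundary terms contribute generators of $K_k({\bf a})$ (and symmetrically of $K_k({\bf b})$) to the module, so the module you need is spanned by the pairwise convolutions \emph{together with} the generators of the two kernels; your rank estimate ``essentially $r_a r_b$'' should therefore be $r_a r_b+r_a+r_b$ or so. This is cosmetic; the argument goes through.

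Part~(d) is where you have a genuine gap. Your computation is correct only when $t=k$: then the low digit $i_0$ of $i$ in base~$k$ coincides with $i\bmod t$, and the kernel of the shuffle drops directly into the kernels of the constituents. But the paper needs (and Allouche--Shallit prove) the result for arbitrary~$t$ and~$k$; for instance, the proofs of the image theorems later in the paper use a $2$-way or $3$-way shuffle of $K$-regular sequences with $K=k^r$, and for $k=2$, $t=3$ no power of~$k$ is divisible by~$t$. In that regime your reduction fails: for $k=3$, $t=2$ one finds, e.g., $(c(3n))_{n\ge 0}=(a(3\ell))_{\ell}\,\sha\,(b(3\ell+1))_{\ell}$ and $(c(3n+1))_{n\ge 0}=(b(3\ell))_{\ell}\,\sha\,(a(3\ell+2))_{\ell}$, so kernel elements of~${\bf c}$ are again $2$-way shuffles, but now of kernel elements of~${\bf a}$ and~${\bf b}$ \emph{in either order}. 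The fix is to observe that this pattern persists at every level and that the shuffle operation is bilinear, so $K_k({\bf c})$ lies in the module spanned by $\{a_i\,\sha\,b_j,\ b_j\,\sha\,a_i\}$ over generating sets $\{a_i\}$, $\{b_j\}$ of the two kernels (and analogously with $t$-fold shuffles in all cyclic orders for general~$t$). You should replace your ``$t$ compatible with base-$k$'' caveat with this argument.
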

For proofs, see  \cite{Allouche-Shallit:1992:regular-sequences}.

\begin{remark}\label{rem:poly-regular}
  Let $p$ be a multivariate polynomial with integer coefficients, and suppose
  $d_1, \ldots, d_t\in\set{0,\dots,k-1}$. Then
  \begin{equation*}
    (\f{p}{\abs{z}_{d_1}, \abs{z}_{d_2}, \dots, \abs{z}_{d_t}})_{n \geq 0}
  \end{equation*}
  with $z = (n)_k$, is (effectively) a $k$-regular sequence in~$n\in\N_0$ over~$\Z$. This is
  true because the number of occurrences $\abs{z}_d$ of a digit~$d$
  in the standard $k$-ary expansion~$z=(n)_k$ is a $k$-regular
  sequence by
  \cite[Theorem~6.1]{Allouche-Shallit:1992:regular-sequences} and, as above,
  $k$-regular sequences are closed under term-by-term addition and multiplication.
\end{remark}

\section{Equality of \texorpdfstring{$k$}{k}-regular sequences}

Before we consider growth of $k$-regular sequences in the next section,
we study decidability of whether one $k$-regular sequence equals another.

\begin{theorem}
  Let $k\geq2$ be an integer and $\S\supseteq\N_0$ be a set.
  The decision problem
  \begin{quote}
    ``Given two $k$-regular sequences $(f(n))_{n \geq 0}$ and $(g(n))_{n \geq 0}$ over~$\S$,
    does $f(n) = g(n)$ for all $n$?''
  \end{quote}
  is recursively solvable.
\end{theorem}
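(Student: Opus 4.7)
The plan is to reduce to testing whether a single $k$-regular sequence is identically zero. Since the class of $k$-regular sequences is effectively closed under addition (and hence subtraction, by multiplying by $-1$), we can compute a linear representation $(v, M_0, \ldots, M_{k-1}, w)$ of rank $r$ for the difference sequence $h(n) = f(n) - g(n)$. Then $f \equiv g$ iff $h \equiv 0$, so the problem reduces to deciding $h(n) = 0$ for all $n \geq 0$, given such a representation.

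To test whether $h$ is identically zero, I would compute the smallest subspace $V$ of row vectors (over the coefficient field, which is $\Q$ or an algebraic number field) that contains $v$ and is closed under right multiplication by each of $M_0, \ldots, M_{k-1}$. This subspace is exactly the span of $\{v M_{n_0} \cdots M_{n_{s-1}} : n \geq 0\}$, i.e., the span of all vectors that can sit on the left of $w$ in the formula for $h(n)$. A standard saturation procedure computes a basis: initialize $B := \{v\}$; repeatedly, for each $u \in B$ and each digit $i$, test by Gaussian elimination whether $u M_i \in \operatorname{span}(B)$, and if not, append it to $B$. This loop terminates after at most $r$ additions since $\dim V \leq r$, and the resulting $B$ is a basis of $V$.

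Finally, because the map $u \mapsto uw$ is linear, the sequence $h$ vanishes identically iff $uw = 0$ for every $u \in B$. This is a finite check, and combined with the preceding computation it yields an algorithm.

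The only mild obstacle is ensuring the linear algebra is genuinely effective. If the entries of $v$, $w$, $M_i$ lie in $\Q$, this is immediate; if they lie in a real algebraic number field, it follows from the effective arithmetic on algebraic numbers recalled in the introduction. Note that the hypothesis $\S \supseteq \N_0$ plays no role beyond guaranteeing that the values $f(n) - g(n)$ make sense; the decision is carried out in the ambient coefficient field of the linear representation, not in $\S$ itself.
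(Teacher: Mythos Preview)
Your proposal is correct and follows essentially the same approach as the paper: form the difference $h=f-g$, obtain a linear representation, and decide whether $h$ is identically zero by linear algebra. The paper invokes the minimization algorithm of Berstel--Reutenauer as a black box (rank~$0$ iff zero), whereas you spell out the reachable-subspace saturation explicitly; your procedure is exactly the ``left half'' of that minimization, so the two arguments are the same in substance.

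One small caveat worth tightening: your saturation computes the span of $vM_{i_1}\cdots M_{i_\ell}$ over \emph{all} digit strings, not just canonical base-$k$ expansions, so the claimed equality with $\operatorname{span}\{vM_{n_0}\cdots M_{n_{s-1}}:n\ge 0\}$ and the ``iff'' in the final step require the representation to be insensitive to leading zeros. This is the standard convention for linear representations of $k$-regular sequences (and the paper's appeal to minimization relies on it too), but you may want to say so explicitly, or note that one can effectively enforce it by a trivial augmentation of the representation.
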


\begin{proof}
  The sequence~$(f(n)-g(n))_{n \geq 0}$ is $k$-regular by the closure properties of $k$-regular sequences.
  We compute a linear representation of this sequence and apply a minimization algorithm
  (see Berstel and Reutenauer~\cite{Berstel-Reutenauer:2011:noncommutative-rational-series}) to it.
  This results in a linear representation of minimum rank. This rank is $0$ iff we started with the zero sequence, iff
  $f(n) = g(n)$ for all $n$.
\end{proof}

\section{Growth of \texorpdfstring{$k$}{k}-regular sequences}

We use the standard notation for asymptotic growth of sequences.
Let $(f(n))_{n\ge0}$ and $(g(n))_{n\ge0}$ be sequences. Then  
\begin{itemize}
\item $f(n) \in O(g(n))$ means that there exist $n_0$, $c > 0$ such that
  $f(n) \leq c g(n)$ for all $n \geq n_0$, and
\item $f(n) \in \Omega(g(n))$ means that there exist $n_0$, $c > 0$ such that
  $f(n) \geq c g(n)$ for all $n \geq n_0$.
\end{itemize}
For simplicity, we sometimes say that the sequence $(f(n))_{n\ge0}$ is in $O(g(n))$ or  $\Omega(g(n))$.

In what follows, $\Alg$ denotes the set of real algebraic numbers.

\subsection{Lower bounds}
\label{sec:lower-bound}

\begin{theorem}\label{thm:unsolvable:Omega-n}
  Let $k\geq2$ be an integer and $\S$ be a set with
  $\N\subseteq\S\subseteq\Alg$. The decision problem
  \begin{quote}
    ``Given a $k$-regular sequence $(f(n))_{n \geq 0}$ over $\S$,
    is $f(n)$ in $\f{\Omega}{n}$?''
  \end{quote}
  is recursively unsolvable.
\end{theorem}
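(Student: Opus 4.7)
The plan is to reduce from Hilbert's tenth problem, along the lines hinted at by Remark~\ref{rem:poly-regular}. Given $p\in\Z[x_1,\dots,x_t]$, I will build (effectively) a $k$-regular sequence $(f(n))_{n\ge0}$ taking values in $\N\subseteq\S$ such that $f\in\f{\Omega}{n}$ if and only if $p$ has no zero in~$\N_0^t$. The engine is the observation that polynomial expressions in the digit counts of a $k$-ary expansion produce $k$-regular sequences; multiplying by~$n$, which is itself a $k$-regular sequence, then amplifies the distinction between ``vanishing on infinitely many~$n$'' and ``bounded below by~$1$ everywhere''.

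The one technical wrinkle is that in a fixed base~$k$ there are only $k$ distinct digits, possibly fewer than~$t$. I handle this by first working in base $k^m$ for some~$m$ with $k^m\ge t+2$, exploiting the classical fact that any $k^m$-regular sequence is also $k$-regular \cite{Allouche-Shallit:1992:regular-sequences}. Pick distinct $d_1,\dots,d_t\in\set{0,1,\dots,k^m-1}$ together with an auxiliary nonzero ``filler'' digit $e\notin\set{d_1,\dots,d_t}$, and define
\begin{equation*}
  f(n) = n\cdot \f{p}{\abs{z}_{d_1},\dots,\abs{z}_{d_t}}^2 + 1,\qquad z=(n)_{k^m}.
\end{equation*}
By Remark~\ref{rem:poly-regular} (applied in base~$k^m$), closure of $k$-regular sequences under sum and product, and the fact that $(n)_{n\ge0}$ is $k$-regular, $(f(n))_{n\ge0}$ is an effectively computable $k$-regular sequence, and $f(n)\ge1$ for every~$n$, so it takes values in~$\N\subseteq\S$.

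Two cases then finish the reduction. If $p$ has no zero in $\N_0^t$, then $\f{p}{\abs{z}_{d_1},\dots,\abs{z}_{d_t}}$ is a nonzero integer for every~$n$, so its square is at least~$1$ and $f(n)\ge n+1$, giving $f\in\f{\Omega}{n}$. Conversely, if $\f{p}{a_1,\dots,a_t}=0$ for some $a_i\in\N_0$, then the integers whose base-$k^m$ expansion is $e\,d_1^{a_1}\cdots d_t^{a_t}\,e^N$, $N\ge0$, form an infinite family of~$n$ with $\abs{z}_{d_i}=a_i$ and therefore $f(n)=1$, ruling out $f\in\f{\Omega}{n}$. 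Hence a decision procedure for ``$f\in\f{\Omega}{n}$'' would decide Hilbert's tenth problem.

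The main point to get right, and essentially the only nonroutine part, is the bookkeeping that guarantees that the counts $\abs{z}_{d_i}$ can be set to prescribed values in~$\N_0$ independently of each other and of the length of~$z$, with a filler digit~$e$ always on hand; this is precisely what forces the detour through base~$k^m$.
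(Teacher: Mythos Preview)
Your proof is correct and follows essentially the same route as the paper: reduce from Hilbert's tenth problem by passing to base $K=k^m$ with enough digits, encode the variables of $p$ as digit-counts, square $p$ so that nonvanishing forces a value $\geq 1$, and multiply by (a shift of) $n$ to separate the two cases. The only cosmetic differences are that the paper writes $f(n)=(n+1)\,p(\abs{z}_1,\dots,\abs{z}_t)^2\,(\abs{z}_{t+1}+1)$ and lets the sequence hit~$0$, whereas you add~$1$ to keep values in~$\N$ and use an explicit filler digit~$e$ (playing the same role as the paper's digit~$t{+}1$) to produce infinitely many witnesses; your version has the minor advantage of literally respecting the hypothesis $\N\subseteq\S$ even when $0\notin\S$.
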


\begin{proof}
  We reduce from Hilbert's tenth problem.
  For a given multivariate polynomial~$p$ in $t$~variables over~$\Z$,
  we choose $r\in\N$ such that $K=k^r > t+1$, and 
  we construct the sequence
  \begin{equation*}
    f(n) \coloneqq
    (n+1)
    \bigl(\f{p}{\abs{z}_1, \abs{z}_2, \dots, \abs{z}_t}\bigr)^2 \,
    (\abs{z}_{t+1} + 1)
  \end{equation*}
  with $z = (n)_K$.
  The sequence $(f(n))_{n \geq 0}$ is $K$-regular (see Remark~\ref{rem:poly-regular})
  and therefore $k$-regular as well
  \cite[Theorem 2.9]{Allouche-Shallit:1992:regular-sequences}.

  The following claim shows that the above indeed provides a
  reduction.
  \begin{claim*}
    The sequence $(f(n))_{n \geq 0}$ is not in $\Omega(n)$ iff there exist
    nonnegative integers $x_1$, $x_2$, \ldots, $x_t$ such that
    $p(x_1, \dots, x_t) = 0$.
  \end{claim*}
  To see this, note that 
  $f(n) = 0$ iff at least one factor is zero, but the first
  and third factors defining $f$ are never zero.  Hence $f(n) = 0$ iff
  $\f{p}{\abs{z}_1, \abs{z}_2, \dots, \abs{z}_t} = 0$.
  Moreover, note that if a zero of $(f(n))_{n\ge0}$ occurs once, it
  occurs infinitely often by the third factor of the definition for $f$.

  Thus, if $p(x_1, \dots, x_t) = 0$ has at least one solution, then $f(n) = 0$ for 
  infinitely many $n$, and hence $f(n)$ is not in $\Omega(n)$.
  Otherwise, if $p(x_1, \dots, x_t) = 0$ does not have any solution, then
  its absolute value is at least $1$, so the value of $f(n)$
  is at least $n+1$, and hence $f(n)$ is in $\Omega(n)$. This 
  completes the proof of the claim and consequently the proof of Theorem~\ref{thm:unsolvable:Omega-n}.
\end{proof}

Theorem~\ref{thm:unsolvable:Omega-n} can be extended to other growth rates as well.  

\begin{corollary}\label{cor:unsolvable:Omega}
  Let $k\geq2$ be an integer. Suppose $\sigma$ is a real number
  and $\ell$ is a nonnegative integer, not both zero. Assume
  that $k^\sigma$ is an algebraic number.   Let $\S$ be a ring with
  $\Z\subseteq\S\subseteq\Alg$ and containing $k^\sigma$.
  The decision problem
  \begin{quote}
    ``Given a $k$-regular sequence $(f(n))_{n \geq 0}$ over~$\S$,
    is $f(n)$ in $\f[\big]{\Omega}{n^\sigma (\log n)^\ell}$?''
  \end{quote}
  is recursively unsolvable.
\end{corollary}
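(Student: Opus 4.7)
My plan is to reduce from Hilbert's tenth problem, following the template of the proof of Theorem~\ref{thm:unsolvable:Omega-n} but replacing the linear factor $(n+1)$ by a $k$-regular sequence of growth order $n^\sigma (\log n)^\ell$. Given a polynomial $p$ in $t$ variables over $\Z$, I would choose $r \in \N$ with $K = k^r > t+1$, set $\beta = K^\sigma = (k^\sigma)^r$ (which lies in $\S$ because $\S$ is a ring containing $k^\sigma$), and define
\begin{equation*}
  g(n) \coloneqq \beta^{\abs{z}}\, \abs{z}^\ell\, \bigl(\f{p}{\abs{z}_1, \dots, \abs{z}_t}\bigr)^2\, (\abs{z}_{t+1} + 1),
\end{equation*}
where $z = (n)_K$.

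The $K$-regularity of $g$ (hence $k$-regularity by \cite[Theorem~2.9]{Allouche-Shallit:1992:regular-sequences}) is routine: the polynomial factor and $\abs{z}_{t+1}+1$ are covered by Remark~\ref{rem:poly-regular}, and so is $\abs{z}^\ell$ (write $\abs{z} = \sum_d \abs{z}_d$), while $\beta^{\abs{z}}$ admits the rank-one linear representation $v = [1]$, $M_d = [\beta]$ for every digit~$d$, $w = [1]$. Termwise products close out the $K$-regularity check, and all values lie in~$\S$. The correctness claim is that $g(n) \in \f{\Omega}{n^\sigma(\log n)^\ell}$ if and only if $p$ has no zero in $\N_0^t$. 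If $p(x_1,\dots,x_t) = 0$ for nonnegative integers $x_i$, then $g(n) = 0$ whenever $\abs{z}_i = x_i$ for $1 \le i \le t$; the remaining digits of $z$ are free (the condition $K > t+1$ guarantees a digit beyond $1,\dots,t$), so such $n$ are unbounded, and since $n^\sigma(\log n)^\ell > 0$ for $n \ge 2$ this forces $g \notin \f{\Omega}{n^\sigma(\log n)^\ell}$. Conversely, if $p$ has no zero over $\N_0^t$, then $\bigl(\f{p}{\dots}\bigr)^2 \ge 1$ on integers, so $g(n) \ge \beta^{\abs{z}}\abs{z}^\ell$.

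The main technical point is the growth comparison $\beta^{\abs{z}}\abs{z}^\ell \in \f{\Omega}{n^\sigma(\log n)^\ell}$, which uses $\log_K n \le \abs{z} \le \log_K n + 1$ and a case split on the sign of~$\sigma$: when $\sigma \ge 0$, $\beta \ge 1$ and $\beta^{\abs{z}} \ge \beta^{\log_K n} = n^\sigma$; when $\sigma < 0$, $0 < \beta < 1$ and $\beta^{\abs{z}} \ge \beta^{\log_K n + 1} = \beta\, n^\sigma$. Combined with $\abs{z}^\ell \ge (\log_K n)^\ell$ for $n \ge 1$, this gives $\beta^{\abs{z}}\abs{z}^\ell \ge c\, n^\sigma(\log n)^\ell$ for some constant $c > 0$ and all sufficiently large~$n$. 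The hypothesis that $\sigma$ and $\ell$ are not both zero keeps the target function nonconstant and excludes the degenerate $\f{\Omega}{1}$ case, which is addressed separately.
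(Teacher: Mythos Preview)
Your proof is correct and follows the same template as the paper's: reduce from Hilbert's tenth problem by multiplying the squared-polynomial factor $\bigl(\f{p}{\abs{z}_1,\dots,\abs{z}_t}\bigr)^2(\abs{z}_{t+1}+1)$ by a $k$-regular sequence of growth order $n^\sigma(\log n)^\ell$. The only difference is cosmetic: the paper packages the growth factor as $h_{\sigma,\ell}(n)+1$ via the Jordan-block construction of Remark~\ref{rem:growth-theta-sequence} (which yields $\binom{s}{\ell}k^{(s-\ell)\sigma}$ with $s=\abs{(n)_k}$), whereas you build the equivalent $\beta^{\abs{z}}\abs{z}^\ell$ by hand; your version is a bit more self-contained, at the cost of the explicit sign-of-$\sigma$ case split. (Your final sentence about the $\Omega(1)$ case being ``addressed separately'' is a dangling remark---nothing is actually addressed---but since the corollary excludes $\sigma=\ell=0$ this is harmless.)
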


\begin{remark}\label{rem:growth-theta-sequence}
  For a real number~$\sigma$ and a nonnegative integer~$\ell$ we
 construct a $k$-regular sequence $(h_{\sigma,\ell}(n))_{n \geq 0}$ with positive
  terms (except for the first few terms, which may be $0$) satisfying
  \begin{equation*}
    h_{\sigma,\ell}(n) \in \f[big]{\Theta}{n^\sigma (\log n)^\ell},
  \end{equation*}
  as follows.

  Set $H_0=\cdots=H_{k-1}=J_{\ell+1}(k^\sigma)$, where
  $J_{\ell+1}(k^\sigma)$ is a Jordan block, of size $\ell + 1$, corresponding to the eigenvalue $k^\sigma$. We set
  \begin{equation*}
    h_{\sigma,\ell}(n) = e_1 \f{H}{n}\, e_{\ell+1},
  \end{equation*}
  where $H(n) = H_{n_0} \cdots H_{n_{\ell-1}}$ for $(n)_k = n_{\ell-1} \cdots n_0$,
  and the $e_i$ are the $i$th unit vectors. Therefore, the sequence~$h_{\sigma,\ell}(n)$
  is $k$-regular, as it is defined by a linear representation.
  Explicitly, we have
  \begin{equation*}
    h_{\sigma,\ell}(n) = \binom{s}{\ell} k^{(s-\ell)\sigma},
  \end{equation*}
  where $s=\floor{\log_k n}+1$. Thus, this sequence's asymptotic behavior is
  $\f[big]{\Theta}{n^\sigma (\log n)^\ell}$. If $\ell=0$, then no term is $0$.   If $\ell \not=0$, then
  only terms with $n \leq k^{\ell-1}$ are~$0$.
\end{remark}

\begin{proof}[Proof of Corollary~\ref{cor:unsolvable:Omega}]
  The proof runs along the same lines as the proof of
  Theorem~\ref{thm:unsolvable:Omega-n}.  
  For a given multivariate polynomial~$p$ in $t$~variables over~$\Z$,
  we instead choose $r\in\N$ such that $K=k^r > t+1$, and 
  we define
  \begin{equation*}
    f(n) \coloneqq
    (\f{h_{\sigma,\ell}}{n} + 1)
    \bigl(\f{p}{\abs{z}_1, \abs{z}_2, \dots, \abs{z}_t}\bigr)^2 \,
    (\abs{z}_{t+1} + 1)
  \end{equation*}
  with $z = (n)_K$ and $(\f{h_{\sigma,\ell}}{n})_{n\ge0}$ of
  Remark~\ref{rem:growth-theta-sequence}. Note that the factor
  $(\f{h_{\sigma,\ell}}{n} + 1)$ of $f(n)$ is increasing and always positive.
\end{proof}

\subsection{Upper bounds}
\label{sec:upper-bound}

Let $(h(n))_{n \geq 0}$ be a sequence. We say that a sequence~$(M(n))_{n\ge0}$ of matrices
with entries in a set~$\S\subseteq\Alg$ is in $\Oh{h(n)}$,
formally written as usual as
\begin{equation*}
  M(n) \in \Oh{h(n)},
\end{equation*}
if each sequence of a fixed entry (fixed row and column) of the matrices is in
$\Oh{h(n)}$. Rephrased, this means that the sequence of maximum
norms of the matrices lies in $\Oh{h(n)}$. By the equivalence of
norms, this is also true for any other norm.
As in the one-dimensional case, we say that the sequence~$(M(n))_{n\ge0}$ is
\emph{bounded} if it lies in~$\Oh{1}$.

\begin{remark}\label{rem:growth:linear-deviation}
  Let $\sigma\in\R$ and $\ell\in\N_0$, and set $h(n) = n^\sigma (\log
  n)^\ell$. If $m(n) = cn + \tau(n)$ for some constant~$c\neq0$ and
  some sequence~$\tau(n)\in\oh{n}$, then
  \begin{equation*}
    \Oh{h(m)} = \Oh{h(n)}.
  \end{equation*}
  as $n\to\infty$.
  This follows from
  \begin{align*}
    h(m(n)) &= \bigl(cn + \tau(n)\bigr)^\sigma \bigl(\log (cn + \tau(n))\bigr)^\ell \\
    &= c^\sigma h(n)
    \Bigl(1 + \frac{\tau(n)}{cn}\Bigr)^\sigma
    \biggl(1 + \frac{\log\bigl(c+\frac{\tau(n)}{n}\bigr)}{\log n}\biggr)^\ell \\
    &= c^\sigma h(n) \bigl(1+\oh{1}\bigr).
  \end{align*}
\end{remark}

\begin{theorem}\label{thm:unsolvable:bounded}
  Let $k\geq2$ be an integer and $\S$ be a ring with
  $\Q\subseteq\S\subseteq\Alg$. The decision problem
  \begin{quote}
    ``Given a $k$-regular sequence $(f(n))_{n \geq 0}$ over~$\S$,
    is $f(n)$ bounded?''
  \end{quote}
  is recursively unsolvable.
\end{theorem}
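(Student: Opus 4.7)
The plan is to reduce from Hilbert's tenth problem, constructing a $k$-regular sequence over $\Q$ (hence over $\S$) whose boundedness encodes the existence of integer solutions. A fundamental challenge is that the construction from the proof of Theorem~\ref{thm:unsolvable:Omega-n} produces an always-unbounded sequence (because its non-solution values are already $\geq n+1$), so a substantially different approach is needed here.

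The key new ingredient is the assumption $\Q\subseteq\S$, which permits $k$-regular sequences with rational values. For any $c\in\Q$ and any digit $d\in\{0,\ldots,K-1\}$ (with $K=k^r>t+2$), the sequence $c^{\abs{z}_d}$ with $z=(n)_K$ is $K$-regular via a $1\times 1$ linear representation whose multiplier is $c$ on digit $d$ and $1$ elsewhere; in particular both the exponentially growing $2^{\abs{z}_{t+1}}$ and the exponentially decaying $2^{-\abs{z}_{t+1}}$ are available. Combined with the polynomial sequence $p(\abs{z}_1,\ldots,\abs{z}_t)^2$ from Remark~\ref{rem:poly-regular}, which satisfies $p(x)^2\geq 1$ precisely when $p(x)\neq 0$ for integer $x$, together with the closure properties under sum, product, convolution, and perfect shuffle, we have a rich toolkit for building the desired $f$.

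Concretely, I would aim for $f(n)=2^{\abs{z}_{t+1}}\cdot G(n)$, where $G$ is a $k$-regular sequence designed so that $G(n)\leq C\cdot 2^{-\abs{z}_{t+1}}$ whenever $p(\abs{z}_1,\ldots,\abs{z}_t)^2\geq 1$ (forcing $f$ to be bounded in the `no solution' case) but $G(n)$ is bounded away from $0$ whenever $p(\abs{z}_1,\ldots,\abs{z}_t)^2=0$ (forcing $f(n)\geq c\cdot 2^{\abs{z}_{t+1}}$ at such $n$). The free `padding' digit $t+2$ ensures that whenever a solution tuple $(x_1^*,\ldots,x_t^*)$ exists, infinitely many $n$ realize those digit-counts with arbitrarily large $\abs{z}_{t+1}$, making $f$ unbounded. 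Remark~\ref{rem:growth:linear-deviation} would then be used to verify the asymptotics under the argument perturbations arising in the analysis.

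The main obstacle will be to realize $G$ as a $k$-regular sequence. A natural closed-form candidate is $G(n)=(1/2)^{\abs{z}_{t+1}\cdot p(\abs{z}_1,\ldots,\abs{z}_t)^2}$, which has exactly the required behavior; however this is not $k$-regular, since its exponent is a product of two unbounded quantities and would require a linear representation of unbounded rank. Overcoming this obstruction is the crux of the proof and will likely require working directly at the matrix level to design a finite-dimensional linear representation whose products mimic the desired conditional behavior, in the spirit of Blondel--Tsitsiklis-style undecidability results for boundedness of finitely generated matrix semigroups, adapted to the digit-reading setting of $k$-regular sequences.
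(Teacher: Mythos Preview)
Your proposal is not a proof: it is an outline of an approach (reduction from Hilbert's tenth problem) that you yourself recognize does not go through. The candidate $G(n)=(1/2)^{\abs{z}_{t+1}\cdot p(\abs{z}_1,\ldots,\abs{z}_t)^2}$ is indeed not $k$-regular, and no amount of closure-property manipulation will produce a $k$-regular sequence whose growth switches between ``bounded'' and ``exponential'' according to whether $p$ vanishes at a given digit-count tuple. That kind of conditional behavior is precisely what a finite-dimensional linear representation cannot express, so the Hilbert route appears to be a dead end for this theorem, not merely a technical hurdle.

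Your closing sentence points at the right idea but stops short of it. The paper does \emph{not} reduce from Hilbert's tenth problem here; it reduces from the undecidability, due to Blondel and Tsitsiklis, of deciding whether all finite products from a given finite set of rational matrices are bounded. The reduction is almost immediate once stated: given matrices $F_0,\ldots,F_{k-1}$, each entry sequence $f_{i,j}(n)=e_i^{\!\top} F_{n_0}\cdots F_{n_{s-1}} e_j$ (where $(n)_k=n_{s-1}\cdots n_0$) is $k$-regular by definition, and the matrix products are bounded iff every such $f_{i,j}$ is bounded. The only wrinkle is that base-$k$ representations forbid leading zeros, so products with trailing $F_0$'s are not all captured by $F(n)$; the paper handles this by factoring off $F_0^{s-j}$ on the right and observing that boundedness of powers of a single matrix is decidable via its Jordan form. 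You should abandon the Hilbert construction and carry out this matrix-semigroup reduction explicitly.
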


\begin{remark}
  The above problem is decidable for $k$-regular sequences that have
  a linear representation with integer matrices;  
  see the algorithm of
  Mandel and Simon~\cite{Mandel-Simon:1977:finite-matrix-semigroups}
  for matrices with nonnegative entries and the algorithm of
  Jacob~\cite{Jacob:1977:decidablity-matrix-semigroups,
    Jacob:1977:algorithm-matrix-semigroups, 
    Jacob:1978:finiteness-matrix-semigroups}
  for general integer matrices.
\end{remark}

  Given square matrices~$F_0$, \ldots, $F_{k-1}$ over a ring~$\S$
  all of the same dimension, 
  we set $F(n) =
  F_{n_0} \cdots F_{n_{s-1}}$ for $(n)_k = n_{s-1} \cdots n_0$, and we call
  the sequence~$(F(n))_{n\ge0}$ the
  {\it matrix-valued linear representation sequence\/} of the set $\set{F_0, \ldots, F_{k-1}}$.
  We use this notion in the following lemma, which extends decidability
  from $k$-regular sequences to matrix-valued linear representation sequences.

\begin{lemma}\label{lem:property:k-regular-products}
  Let $\S$ be a ring.
  Let $P$ be a property of a sequence over~$\S$,
  i.e., $P$ for each sequence over~$\S$ is
  either true or false. Suppose we can extend property~$P$ to
  sequences of matrices over~$\S$ in one of the following ways: Property~$P$
  holds for a sequence of matrices iff $P$ holds for
  \begin{enumerate}
  \item\label{itm:lem:property:k-regular-products:all} all sequences or
  \item\label{itm:lem:property:k-regular-products:any} any sequence
  \end{enumerate}
  consisting of a fixed entry (fixed row and column).

  If $P$ is recursively solvable for $k$-regular sequences over~$\S$,
  then $P$ is recursively
  solvable for matrix-valued linear representation sequences of a set of $k$ square
  matrices over~$\S$, all of the same dimension.
\end{lemma}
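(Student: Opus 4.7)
The plan is to reduce the matrix-valued problem to finitely many instances of the scalar problem, via the observation that every fixed entry of the matrix sequence is itself a $k$-regular sequence with an easily computable linear representation.

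Write $d$ for the common dimension of $F_0,\ldots,F_{k-1}$. For $1\leq i,j\leq d$, let $u_i$ denote the $1\times d$ row vector with a~$1$ in position~$i$ and $0$ elsewhere, and let $w_j$ denote the $d\times1$ column vector defined analogously. Then the $(i,j)$-entry of $F(n)$ is
\begin{equation*}
  F_{i,j}(n) \coloneqq u_i\, F(n)\, w_j = u_i\, F_{n_0} \cdots F_{n_{s-1}}\, w_j,
\end{equation*}
so that $(u_i, F_0, \ldots, F_{k-1}, w_j)$ is by definition a linear representation of the scalar $k$-regular sequence $(F_{i,j}(n))_{n\geq0}$, and this representation can be written down effectively from the given $F_0,\ldots,F_{k-1}$.

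The decision procedure then proceeds as follows. We construct the $d^2$ linear representations above and apply the hypothesised decider for $P$ on $k$-regular sequences to each of the resulting sequences $(F_{i,j}(n))_{n\geq0}$. This produces $d^2$ yes/no answers, which we combine according to how $P$ has been extended: in case~\ref{itm:lem:property:k-regular-products:all} we output ``yes'' iff every entry sequence satisfies~$P$, and in case~\ref{itm:lem:property:k-regular-products:any} we output ``yes'' iff at least one does. Correctness is immediate from the extension of~$P$ to matrix-valued sequences.

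There is no real obstacle here; the substance of the lemma is simply that selecting a fixed entry of the matrix product commutes with the linear-representation formalism, so that an entrywise property of the matrix-valued sequence $(F(n))_{n\geq0}$ reduces, by brute force over the finitely many entry positions, to the same property on the $d^2$ auxiliary scalar $k$-regular sequences $(F_{i,j}(n))_{n\geq0}$.
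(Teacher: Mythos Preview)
Your proof is correct and essentially identical to the paper's: both extract the entry sequences $f_{i,j}(n)=e_iF(n)e_j$, observe that these are $k$-regular via the obvious linear representation, and then decide~$P$ for the finitely many entry sequences. The only cosmetic difference is that the paper handles case~\ref{itm:lem:property:k-regular-products:any} by passing to the negation of~$P$, whereas you treat both cases directly.
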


We will use this lemma in the proof of
Theorem~\ref{thm:unsolvable:bounded} with the property~$P$ being
the boundedness of a sequence and in the proof of
Theorem~\ref{thm:unsolvable:poly}, where $P$ is true iff a sequence
does not have at least polynomial growth.

\begin{proof}[Proof of Lemma~\ref{lem:property:k-regular-products}]
  We show (\ref{itm:lem:property:k-regular-products:all}). Then
  (\ref{itm:lem:property:k-regular-products:any}) follows by using the
  negation of property~$P$.

  We define
  \begin{equation*}
    f_{i,j}(n) = e_i F(n) e_j
  \end{equation*}
  with $F(n)$ as above and
  where $e_i$ is the $i$th unit vector. Therefore, $(f_{i,j}(n))_{n\ge0}$ is the
  sequence of entries in row~$i$ and column~$j$ of~$(F(n))_{n\ge0}$. All
  sequences~$(f_{i,j}(n))_{n\ge0}$ are $k$-regular, as they are defined by a
  linear representation.
  Clearly all these
  sequences~$(f_{i,j}(n))_{n\ge0}$ satisfy~$P$ iff $F(n)$ satisfies~$P$.

  As the question of deciding property $P$ of a $k$-regular sequence is
  recursively solvable, we can decide $P$ for
  each of the (finitely many) distinct sequences $(f_{i,j}(n))_{n \geq 0}$
  and therefore can decide~$P$ for~$(F(n))_{n\ge0}$.
\end{proof}

\begin{proof}[Proof of Theorem~\ref{thm:unsolvable:bounded}]
  We reduce from the question of
  boundedness of all products of matrices over the rationals, which is
  not recursively solvable; see Blondel and
  Tsitsiklis~\cite{Blondel-Tsitsiklis:2000:boundedness-undecidable}.

  For a given set of matrices $\set{F_0, \ldots, F_{k-1}}$, 
  for all $n_0$, \dots, $n_{s-1}\in\set{0,\dots,k-1}$,
  there is either a largest index~$j\in\set{1,\dots,s}$ with $n_{j-1}\neq0$
  or $j=0$ and such that $F_{n_0}\cdots F_{n_{s-1}} = F_{n_0}\cdots F_{n_{j-1}} F_0^{s-j}$.
  Whenever we are now deciding the boundedness of matrix products,
  we split as above and consider the factors~$F_0^{s-j}$ on the right-hand side
  and the remaining product separately.
  As for the $F_0^{s-j}$, we can decide the
  boundedness of powers of a single matrix from knowledge of its Jordan decomposition.
  And, as for the remaining factors, the statement of the theorem follows
  by using the reduction that is provided by
  Lemma~\ref{lem:property:k-regular-products} with property~$P$ being
  the boundedness of a sequence.
\end{proof}

\begin{corollary}\label{cor:unsolvable:Oh}
  Let $k\geq2$ be an integer, $\sigma$ a real number and $\ell$ a
  nonnegative integer.  Assume that $k^\sigma$ is
  an algebraic number.  Let $\S$ be a ring with
  $\Q\subseteq\S\subseteq\Alg$ and containing $k^\sigma$.
  The decision problem
  \begin{quote}
    ``Given a $k$-regular sequence $(f(n))_{n \geq 0}$ over~$\S$,
    is $f(n)$ in $\Oh{n^\sigma (\log n)^\ell}$?''
  \end{quote}
  is recursively unsolvable.
\end{corollary}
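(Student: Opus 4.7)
The plan is to reduce from the boundedness problem (Theorem~\ref{thm:unsolvable:bounded}), which is applicable since the $\S$ of the corollary also satisfies $\Q \subseteq \S \subseteq \Alg$. Given a $k$-regular sequence $(g(n))_{n\ge 0}$ over~$\S$ whose boundedness we wish to decide, I would form the termwise product
\begin{equation*}
  f(n) \coloneqq g(n) \cdot h_{\sigma,\ell}(n),
\end{equation*}
where $(h_{\sigma,\ell}(n))_{n\ge 0}$ is the $k$-regular sequence from Remark~\ref{rem:growth-theta-sequence} lying in $\Theta(n^\sigma (\log n)^\ell)$. Because $\S$ is a ring containing $\Q$ and $k^\sigma$, all entries of the Jordan-block linear representation defining $h_{\sigma,\ell}$ lie in~$\S$, so $h_{\sigma,\ell}$ is $k$-regular over~$\S$. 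The class of $k$-regular sequences over~$\S$ is effectively closed under termwise product, so $f$ is a $k$-regular sequence over $\S$ that can be computed algorithmically from a representation of~$g$.

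Next I would establish the key equivalence
\begin{equation*}
  f(n) \in \Oh{n^\sigma (\log n)^\ell} \iff (g(n))_{n\ge 0} \text{ is bounded.}
\end{equation*}
By Remark~\ref{rem:growth-theta-sequence}, $h_{\sigma,\ell}(n)$ is strictly positive for all $n > k^{\ell-1}$ and satisfies $h_{\sigma,\ell}(n) \in \Theta(n^\sigma (\log n)^\ell)$. Hence for all sufficiently large $n$,
\begin{equation*}
  \frac{\abs{f(n)}}{n^\sigma (\log n)^\ell}
  = \abs{g(n)} \cdot \frac{h_{\sigma,\ell}(n)}{n^\sigma (\log n)^\ell}
  \in \Theta(\abs{g(n)}),
\end{equation*}
so the left-hand side is bounded in~$n$ iff $(g(n))$ is bounded. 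Since the latter is undecidable by Theorem~\ref{thm:unsolvable:bounded}, so is the former.

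The step to watch is that $h_{\sigma,\ell}$ has finitely many initial zeros when $\ell \ge 1$; these affect only finitely many values of $n$ and therefore cannot alter membership in an $\Oh{\cdot}$ class, nor can they spoil the equivalence above, which is asymptotic. Aside from this bookkeeping, the corollary is essentially a ``tensoring'' of Theorem~\ref{thm:unsolvable:bounded} against the benchmark sequence $h_{\sigma,\ell}$, and the main content has already been supplied by that theorem.
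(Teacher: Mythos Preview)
Your proof is correct and follows essentially the same approach as the paper: reduce from Theorem~\ref{thm:unsolvable:bounded} by multiplying the given sequence $g$ termwise with the benchmark sequence $h_{\sigma,\ell}$ of Remark~\ref{rem:growth-theta-sequence}, and observe that $f = g\cdot h_{\sigma,\ell}$ lies in $\Oh{n^\sigma(\log n)^\ell}$ iff $g$ is bounded. Your additional remarks on why $h_{\sigma,\ell}$ takes values in~$\S$ and on the harmlessness of its finitely many initial zeros are correct elaborations that the paper leaves implicit.
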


\begin{proof}
  We reduce from the decision problem stated in Theorem~\ref{thm:unsolvable:bounded}.
  For a $k$-regular
  sequence $(g(n))_{n\ge0}$, we construct $\f{f}{n} = \f{g}{n} \f{h_{\sigma,\ell}}{n}$
  with $\f{h_{\sigma,\ell}}{n}$ as defined in
  Remark~\ref{rem:growth-theta-sequence}.

  Then, the $k$-regular sequence~$(f(n))_{n \geq 0}$ is in $\Oh{n^\sigma (\log n)^\ell}$
  iff $g(n)$ is in $\Oh{1}$, i.e., bounded. Therefore deciding if a
  $k$-regular sequence is in $\Oh{n^\sigma (\log n)^\ell}$ implies deciding
  the boundedness of a $k$-regular sequence, which contradicts
  Theorem~\ref{thm:unsolvable:bounded}.
\end{proof}

Let $\sigma\in\R$ and $\ell\in\N_0$. We say that a sequence~$(f(n))_{n \geq 0}$ has {\it exact
growth\/} $n^\sigma (\log n)^\ell$ if
\begin{equation*}
  f(n) \in \Oh[\big]{n^\sigma (\log n)^\ell}
\end{equation*}
but for all $\sigma'\in\R$ and $\ell'\in\N_0$ with
 $(\sigma',\ell')$ lexicographically
 smaller than $(\sigma,\ell)$ we have
\begin{equation*}
  f(n) \not\in \Oh[\big]{n^{\sigma'} (\log n)^{\ell'}}.
\end{equation*}

\begin{proposition}\label{pro:growth-matrices-vs-sequence}
  Let $(f(n))_{n \geq 0}$ be a $k$-regular sequence over a field~$\S \subseteq \Alg$
  with matrices~$F_0, \dots,
  F_{k-1}$ of a minimal linear representation, and set $F(n) =
  F_{n_0} \cdots F_{n_{s-1}}$ for $(n)_k = n_{s-1} \cdots n_0$. Let
  $\sigma\in\R$ and $\ell\in\N_0$, and
  set $h(n) = n^\sigma (\log n)^\ell$. Then
  \begin{equation*}
    f(n) \in \Oh{h(n)}
  \end{equation*}
  if and only if
  \begin{equation*}
    F(n) \in \Oh{h(n)}.
  \end{equation*}
  In particular, both $f(n)$ and $F(n)$ have exactly the same growth rate.
\end{proposition}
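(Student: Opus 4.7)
The implication $F(n) \in \Oh{h(n)} \Rightarrow f(n) \in \Oh{h(n)}$ is immediate because $f(n) = v F(n) w$ is a fixed $\S$-linear combination of the entries of $F(n)$, so the work is in the converse. My plan is to recover every entry of $F(N)$ from values of $f$ at inputs of size $\Theta(N)$, after which the asymptotic bound on $f$ transfers to $F$. Minimality of the linear representation guarantees that the row vectors $\setm{v F(a)}{a \in \N_0}$ span all of $\S^r$ and the column vectors $\setm{F(b) w}{b \in \N_0}$ span all of $\S^r$ (a standard characterization of minimality over a field), so I choose $a_1, \dots, a_r$ and $b_1, \dots, b_r$ realizing bases and assemble them into invertible $r \times r$ matrices $V$ (row $i$ equal to $v F(a_i)$) and $W$ (column $j$ equal to $F(b_j) w$).

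For $N \in \N$, writing $s = \abs{(N)_k}$ and $p_i = \abs{(a_i)_k}$, set
\[
  M_{i,j}(N) \coloneqq b_j\, k^{s + p_i} + N\, k^{p_i} + a_i.
\]
Because the leading digits of $(b_j)_k$ and $(N)_k$ are nonzero, the canonical base-$k$ expansion of $M_{i,j}(N)$ is literally the concatenation of $(b_j)_k$, $(N)_k$, and $(a_i)_k$, so unwinding the definition of $F$ yields the identity $F(M_{i,j}(N)) = F(a_i) F(N) F(b_j)$ and hence
\[
  f(M_{i,j}(N)) = v F(a_i) F(N) F(b_j) w = \bigl(V F(N) W\bigr)_{i,j}.
\]
Since $a_i, b_j, p_i$ are constants and $k^{s-1} \leq N < k^s$, each $M_{i,j}(N) = \Theta(N)$ with $\log M_{i,j}(N) = \log N + \Oh{1}$, so $h(M_{i,j}(N)) = \Theta(h(N))$ regardless of the sign of $\sigma$ or the value of $\ell$ (essentially the calculation behind Remark~\ref{rem:growth:linear-deviation}).

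The hypothesis $f(n) \in \Oh{h(n)}$ therefore forces every entry of $V F(N) W$ to lie in $\Oh{h(N)}$, and inverting via $F(N) = V^{-1} \bigl(V F(N) W\bigr) W^{-1}$ with the constant matrices $V^{-1}, W^{-1}$ shows $F(N) \in \Oh{h(N)}$, as desired. The ``in particular'' claim about exact growth rates is then immediate by applying both directions at each $(\sigma, \ell)$. The one point needing care is the verification that the concatenation really is the canonical base-$k$ representation of $M_{i,j}(N)$ so that the matrix identity $F(M_{i,j}(N)) = F(a_i) F(N) F(b_j)$ holds on the nose; once that is secured, the rest is linear algebra over $\S$ together with the routine estimate $h(\Theta(N)) = \Theta(h(N))$.
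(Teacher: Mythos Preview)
Your proof is correct and follows essentially the same route as the paper's: both use that minimality makes the vectors $vF(a)$ and $F(b)w$ span $\S^r$, combine this with the concatenation identity $F(\text{concat}) = F(\cdot)F(N)F(\cdot)$, and then invoke $h(\Theta(N)) = \Theta(h(N))$ to transfer the $O(h)$ bound. The only cosmetic difference is that you argue directly by assembling the spanning vectors into invertible matrices $V,W$ and writing $F(N)=V^{-1}(VF(N)W)W^{-1}$, whereas the paper argues contrapositively, expanding a single bad entry of $F(n)$ as a finite sum of values $f(m(n))$ via \cite[Corollary~2.3]{Berstel-Reutenauer:2011:noncommutative-rational-series} and observing that one summand must already violate the bound.
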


\begin{proof}
  Let $\lambda$ and $\gamma$ be the vectors of our minimal
  representation, i.e., $f(n) = \lambda \f{F}{n} \gamma$ for all
  $n\in\N_0$. We start with the easy direction: As $f(n)$ is a finite
  linear combination of the entries in the matrix $F(n)$ and each of
  these entries is in $\Oh{h(n)}$, we have $f(n)$ is in 
  $\Oh{h(n)}$.

  Conversely, suppose $F(n)$ is not in
  $\Oh{h(n)}$. As there are only finitely
  many entries in each matrix~$F(n)$, we can assume that one entry of
  $F(n)$ is not in $\Oh{h(n)}$. Let $(g(n))_{n\ge0}$ denote the sequence
  of this fixed entry of the matrices.

  Now, as our linear representation is minimal, there exist finite
  subsets~$P$, $Q \subseteq \N_0$ and
  coefficients~$c_p$, $d_q \in \S\setminus\set{0}$ for $p\in P$, $q\in Q$
  such that
  \begin{equation}\label{eq:growth-matrices-vs-sequence:sum}
    g(n) = \lambda \biggl(\, \sum_{p\in P}\sum_{q\in Q} c_p d_q \f{F}{p} \f{F}{n} \f{F}{q} \biggr) \gamma
  \end{equation}
  for all $n\in\N_0$;
  see~\cite[Corollary 2.3]{Berstel-Reutenauer:2011:noncommutative-rational-series}.
  As $g(n)$ is not in $\Oh{h(n)}$, one of the finitely many
  summands
  \begin{equation*}
    c_p d_q \cdot \lambda \f{F}{p} \f{F}{n} \f{F}{q} \gamma
  \end{equation*}
  of~\eqref{eq:growth-matrices-vs-sequence:sum}, where $p \in P$ and
  $q \in Q$, is not in $\Oh{h(n)}$. Dividing this summand by
  $c_p d_q$ yields a subsequence of $(f(n))_{n \geq 0}$, namely
  \begin{equation*}
    \lambda \f{F}{p} \f{F}{n} \f{F}{q} \gamma
    = \f{f}{m(n)}
  \end{equation*}
  with $m(n) = pk^{\abs{(n)_k}+\abs{(q)_k}} + nk^{\abs{(q)_k}} + q$ for
  all $n\in\N_0$.  As $\abs{(n)_k}=\floor{\log_k n}+1$, we have $m(n)
  = cn + o(n)$ for some constant~$c$, and therefore, by
  Remark~\ref{rem:growth:linear-deviation}, we obtain that the
  subsequence~$\f{f}{m(n)}$ is not in $\Oh{h(n)} =
  \Oh{h(m(n))}$. Thus the sequence~$(f(n))_{n \geq 0}$ itself is not in
  $\Oh{h(n)}$.
\end{proof}

\subsection{Polynomial growth}
\label{sec:polynomial-growth}

The growth of a $k$-regular sequence is always at most polynomial. To
be precise, for a $k$-regular sequence~$(f(n))_{n \geq 0}$ with values in~$\Alg$,
there exists a real constant~$\sigma\geq0$ such that $f(n) \in \Oh{n^\sigma}$; see
\cite[Theorem~2.10]{Allouche-Shallit:1992:regular-sequences}.

\begin{theorem}\label{thm:unsolvable:poly}
  Let $k\geq2$ be an integer and $\S$ be a ring with
  $\Q\subseteq\S\subseteq\Alg$. The decision problem
  \begin{quote}
    ``Given a $k$-regular sequence $(f(n))_{n \geq 0}$ over~$\S$,
    does $f(n)$ have at least polynomial growth,
    i.e., does there exist $\sigma > 0$ such that
    $f(n)$ is not in $\Oh{n^\sigma}$?''
  \end{quote}
  is recursively unsolvable.
\end{theorem}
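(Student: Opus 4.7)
My plan is to mirror the proof of Theorem~\ref{thm:unsolvable:bounded}, substituting boundedness by its polynomial-growth analogue. I would apply Lemma~\ref{lem:property:k-regular-products}\,(\ref{itm:lem:property:k-regular-products:all}) with property~$P$ being ``has sub-polynomial growth'', i.e., $f(n)\in\Oh{n^\sigma}$ for every $\sigma>0$; this is precisely the negation of having at least polynomial growth. Under this choice, $P$ holds for a matrix-valued linear representation sequence if and only if it holds for every scalar entry sequence. Hence recursive solvability of $P$ for scalar $k$-regular sequences would transfer to all matrix-valued linear representation sequences over~$\S$.

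I would then reduce from the boundedness problem for matrix products over~$\Q$ of Blondel and Tsitsiklis (the same source problem used in Theorem~\ref{thm:unsolvable:bounded}), exploiting a dichotomy in their construction: the matrices $F_0,\dots,F_{k-1}$ arising from their reduction from (say) the Post Correspondence Problem satisfy that either every product $F_{n_0}\cdots F_{n_{s-1}}$ is bounded in norm, or some product has an eigenvalue of absolute value strictly above~$1$, whose iterates produce products of length~$s$ with norm at least~$c^{\,s}$ for a fixed $c>1$. Such an iterated product equals $F(n)$ for some $n\in\Theta(k^s)$, so in the latter case the matrix-valued sequence $(F(n))_{n\ge 0}$ attains values of norm $\ge n^{\log_k c}$ along an infinite subsequence and therefore fails to have sub-polynomial growth. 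Hence ``all products bounded'' and ``$(F(n))_{n\ge 0}$ has sub-polynomial growth'' are equivalent for such matrices. Splitting each product as $F_{n_0}\cdots F_{n_{j-1}}\,F_0^{s-j}$ with $j$ the largest index so that $n_{j-1}\neq 0$, just as in Theorem~\ref{thm:unsolvable:bounded}, the tail $F_0^{s-j}$ is a power of a single matrix whose growth type is decidable from the Jordan decomposition of~$F_0$, while the remaining leading factor is exactly the matrix-valued linear representation sequence handled by Lemma~\ref{lem:property:k-regular-products}. A decision procedure for Theorem~\ref{thm:unsolvable:poly} would then yield one for Blondel--Tsitsiklis's boundedness problem, a contradiction.

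The hardest step will be verifying the dichotomy for Blondel and Tsitsiklis's specific reduction: that unboundedness is witnessed by exponential-in-$s$ (rather than merely polynomial-in-$s$) growth, since the latter is only polylogarithmic in~$n$ and still sub-polynomial in~$n$. For standard PCP- or mortality-based encodings this is automatic, because a solution gives rise to a product matrix with spectral radius strictly greater than~$1$, whose powers supply the required~$c^{\,s}$ lower bound; if a candidate reduction gave only polynomial-in-$s$ unboundedness, a conditional amplification pre-processing of the matrices (adjoining a block whose activation is tied to the iterable solution pattern) would be needed to lift the growth into the polynomial-in-$n$ regime.
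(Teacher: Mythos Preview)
Your overall structure matches the paper's: apply Lemma~\ref{lem:property:k-regular-products} with $P$ being ``does not have at least polynomial growth'' and reduce from a Blondel--Tsitsiklis undecidability result. The difference is \emph{which} Blondel--Tsitsiklis problem you reduce from, and that difference is exactly where your gap lies.

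You reduce from \emph{boundedness} of all matrix products and then need the dichotomy ``unbounded $\Rightarrow$ at least polynomial growth in~$n$''. As you yourself note, this fails for general matrix families: a single Jordan block $J_2(1)$ gives products with norm growing linearly in the length~$s$, hence only logarithmically in~$n$, which is still sub-polynomial in~$n$. You are therefore forced to appeal to unstated internal features of the particular Blondel--Tsitsiklis encoding, and you have not verified them; your fallback (``conditional amplification pre-processing'') is a hope, not a proof. Until that step is nailed down the reduction is incomplete.

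The paper sidesteps the dichotomy entirely by reducing from the \emph{other} problem Blondel and Tsitsiklis show to be undecidable in the same paper: deciding whether the joint spectral radius of a finite set of rational matrices is at most~$1$. Given matrices $F_1,\dots,F_{k-1}$, the paper adjoins $F_0$ equal to the zero matrix; then Corollary~\ref{cor:jsr-growth} with $\rho'=1$ gives directly that the joint spectral radius of $F_1,\dots,F_{k-1}$ is at most~$1$ if and only if $F(n)\in\Oh{n^\sigma}$ for every $\sigma>0$. The equivalence between ``$\rho\le 1$'' and ``sub-polynomial in~$n$'' comes straight from the definition of the joint spectral radius via Proposition~\ref{pro:jsr-growth}, so no inspection of any specific PCP encoding is required, and taking $F_0=0$ also removes the $F_0^{s-j}$ tail you proposed to handle separately via the Jordan decomposition of~$F_0$.
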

Consequently, the decision problem in this theorem is whether a $k$-regular sequence has
polynomial growth or a smaller growth.

Recall that the \textit{joint spectral radius} of a finite set $S$
of square matrices is defined to be
$$\rho(S) = \lim_{\ell \rightarrow \infty} \max\setm{ \norm{A_{1} \cdots A_{\ell}}^{1/\ell}}{A_i \in S},$$
where $\norm{\,\cdot\,}$ is any matrix norm.

\begin{proposition}\label{pro:jsr-growth}
  Let $\S\subseteq\Alg$ be a ring, let $\rho\in\R$ be positive,
  let $F_0$, \dots, $F_{k-1}$ be square matrices over~$\S$ all of the same dimension, and set $F(n) =
  F_{n_0} \cdots F_{n_{s-1}}$ for $(n)_k =n_{s-1} \cdots n_0$.
  Then the following two statements are equivalent:
  \begin{enumerate}
  \item The joint spectral radius of $F_0$, \dots, $F_{k-1}$ is $\rho$.
  \item For all $\varepsilon>0$ we have $F(n) \in
    \Oh[\big]{n^{(\log_k\rho)+\varepsilon}}$ and
    $F(n) \not\in \Oh[\big]{n^{(\log_k\rho)-\varepsilon}}$ as $n\to\infty$,
    and we have
    $F_0^s \in \Oh[\big]{(k^s)^{(\log_k\rho)+\varepsilon}}$ as $s\to\infty$.
  \end{enumerate}
  In particular, the joint spectral radius~$\rho$ of $F_0$, \dots, $F_{k-1}$ is bounded by some positive
  $\rho'\in\R$, i.e., $\rho \leq \rho'$, iff for all $\varepsilon>0$ we
  have $F(n) \in \Oh[\big]{n^{(\log_k\rho')+\varepsilon}}$ as $n\to\infty$ and
  $F_0^s \in \Oh[\big]{(k^s)^{(\log_k\rho')+\varepsilon}}$ as $s\to\infty$.

  If $\S$ is a field and the matrices $F_0$, \dots, $F_{k-1}$ are of a minimal
  representation of a $k$-regular sequence~$(f(n))_{n \geq 0}$, then we may
  replace~$F(n)$ by~$f(n)$ in the statements of this proposition.
\end{proposition}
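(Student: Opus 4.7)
The plan is to tie the joint spectral radius of $\set{F_0,\dots,F_{k-1}}$---defined via matrix products of a common length $\ell$---to the growth of $\f{F}{n}$, which is a product of length $s=\floor{\log_k n}+1$. The two viewpoints are linked by the estimates $k^{s-1}\leq n<k^s$, which convert $\rho^s$ into $n^{\log_k\rho}$ up to an arbitrarily small sub-exponential error. Moreover, as already observed in the proof of Theorem~\ref{thm:unsolvable:bounded}, every length-$\ell$ product $F_{a_0}\cdots F_{a_{\ell-1}}$ decomposes uniquely as $\f{F}{n}\cdot F_0^{\ell-j}$, where $j$ is maximal with $a_{j-1}\neq 0$; consequently, controlling $\f{F}{n}$ together with the pure powers $F_0^s$ is the same as controlling arbitrary length-$\ell$ products.

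For $(1)\Rightarrow(2)$ the upper bounds follow directly from the definition of the joint spectral radius. Fixing $\varepsilon>0$ and choosing $\delta>0$ so small that $\log_k(\rho+\delta)<\log_k\rho+\varepsilon$, there is a constant $C_\delta$ with $\max_{A_i}\norm{A_1\cdots A_\ell}\leq C_\delta(\rho+\delta)^\ell$ for every $\ell\geq 1$; applied with $\ell=s$, together with $n\geq k^{s-1}$, this yields $\norm{\f{F}{n}}\leq C'_\delta\,n^{\log_k\rho+\varepsilon}$ and the analogous bound on $\norm{F_0^s}$. For the negative clause $\f{F}{n}\not\in\Oh{n^{\log_k\rho-\varepsilon}}$, the same definition produces, for each $\delta>0$ and for infinitely many $\ell$, a length-$\ell$ product of norm at least $(\rho-\delta)^\ell$; splitting this product as $\f{F}{n_\ell}\cdot F_0^{\ell-j_\ell}$ and dividing out the just-established upper bound on $\norm{F_0^{\ell-j_\ell}}$ forces $\norm{\f{F}{n_\ell}}\geq(\rho-\delta')^{j_\ell}$ for an adjusted $\delta'$, which together with $n_\ell\geq k^{j_\ell-1}$ contradicts any bound of the form $\Oh{n^{\log_k\rho-\varepsilon}}$ once $\delta'$ is small enough.

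For $(2)\Rightarrow(1)$, the inequality $\rho(S)\leq\rho$ follows by estimating an arbitrary length-$\ell$ product as
\begin{equation*}
  \norm{\f{F}{n}\cdot F_0^{\ell-j}}\leq\norm{\f{F}{n}}\cdot\norm{F_0^{\ell-j}}\leq C\,n^{\log_k\rho+\varepsilon}\,(k^{\ell-j})^{\log_k\rho+\varepsilon}\leq C\,(k^\ell)^{\log_k\rho+\varepsilon},
\end{equation*}
so that $\rho(S)\leq\rho k^\varepsilon$ for every $\varepsilon>0$; the reverse inequality $\rho(S)\geq\rho$ comes from the negative clause, since the infinitely many $n$ with $\norm{\f{F}{n}}>n^{\log_k\rho-\varepsilon}\geq k^{(s-1)(\log_k\rho-\varepsilon)}$ are themselves length-$s$ products whose $s$-th roots converge to $\rho k^{-\varepsilon}$.

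The ``in particular'' equivalence for $\rho\leq\rho'$ is then immediate from the upper-bound halves of both implications above, with $\rho'$ in place of $\rho$. Finally, when the $F_i$ form a minimal representation of a $k$-regular sequence $\f{f}{n}$, an application of Proposition~\ref{pro:growth-matrices-vs-sequence} (with the logarithmic exponent there set to $0$) translates every $\Oh{\cdot}$ and non-$\Oh{\cdot}$ statement about $\f{F}{n}$ into the corresponding statement about $\f{f}{n}$. The delicate point is the negative clause in $(1)\Rightarrow(2)$: the joint spectral radius is defined via \emph{all} length-$\ell$ products whereas $\f{F}{n}$ only captures those whose rightmost factor is not~$F_0$, and the upper bound on $F_0^s$ furnished by~(2) is precisely what is needed to attribute the extremal growth to $\f{F}{n}$ itself rather than to the $F_0$-tail.
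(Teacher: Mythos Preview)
Your overall strategy coincides with the paper's: both hinge on the decomposition of an arbitrary length-$\ell$ product as $\f{F}{n}\cdot F_0^{\ell-j}$ (with $j$ maximal such that $a_{j-1}\neq 0$) and translate between growth of all length-$\ell$ products and growth of $\f{F}{n}$ together with the pure powers $F_0^s$. The paper packages this through the auxiliary quantities $\rho_s=\max_{a_0,\dots,a_{s-1}}\norm{F_{a_0}\cdots F_{a_{s-1}}}^{1/s}$ and argues that the three clauses of~(2) are jointly equivalent to $\rho_s\to\rho$, while you argue each implication separately; the content is the same.

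There is, however, a genuine gap in your handling of the negative clause in $(1)\Rightarrow(2)$. From a length-$\ell$ product of norm at least $(\rho-\delta)^\ell$, written as $\f{F}{n_\ell}\cdot F_0^{\ell-j_\ell}$, and the upper bound $\norm{F_0^{\ell-j_\ell}}\leq C(\rho+\delta)^{\ell-j_\ell}$, submultiplicativity yields only
\[
  \norm{\f{F}{n_\ell}} \;\geq\; \frac{(\rho-\delta)^{j_\ell}}{C}\Bigl(\frac{\rho-\delta}{\rho+\delta}\Bigr)^{\ell-j_\ell},
\]
and this does \emph{not} give $\norm{\f{F}{n_\ell}}\geq(\rho-\delta')^{j_\ell}$ for any fixed~$\delta'$ unless $(\ell-j_\ell)/j_\ell$ stays bounded. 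Nothing rules out extremal products with an arbitrarily long $F_0$-tail relative to their $\f{F}{n}$-part. Concretely, take $k=2$, $\S=\Z$, $F_0=(2)$ and $F_1=(0)$ as $1\times 1$ matrices: the joint spectral radius is $\rho=2$, yet $\f{F}{n}=0$ for every $n\geq 1$, so $\f{F}{n}\in\Oh{n^{(\log_2 2)-\varepsilon}}$ trivially and the negative clause of~(2) is simply false.

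The paper's own proof slides over the same point: it applies the equivalence it has established (namely that $\f{F}{n}\in\Oh{n^\sigma}$ \emph{together with} $F_0^s\in\Oh{k^{s\sigma}}$ is equivalent to $\rho_s^s\in\Oh{k^{s\sigma}}$) with $\sigma=\log_k\rho-\varepsilon$ and then negates, which yields only the disjunction $\f{F}{n}\not\in\Oh{n^\sigma}$ \emph{or} $F_0^s\not\in\Oh{k^{s\sigma}}$, not the first disjunct alone. So the defect is in the proposition as stated (the negative clause ought to be this disjunction) rather than in your overall approach; but your sentence ``forces $\norm{\f{F}{n_\ell}}\geq(\rho-\delta')^{j_\ell}$ for an adjusted~$\delta'$'' is, as written, not justified.
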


\begin{corollary}\label{cor:jsr-growth}
  Let the assumptions of Proposition~\ref{pro:jsr-growth} hold and suppose $F_0$ is
  the zero-matrix.
  Then the joint spectral radius~$\rho$ of $F_1$, \dots, $F_{k-1}$ is bounded by some positive
  $\rho'\in\R$, i.e., $\rho \leq \rho'$, iff for all $\varepsilon>0$ we
  have $F(n) \in \Oh[\big]{n^{(\log_k\rho')+\varepsilon}}$ as $n\to\infty$.
\end{corollary}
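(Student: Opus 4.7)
The plan is to derive Corollary~\ref{cor:jsr-growth} as the special case of Proposition~\ref{pro:jsr-growth} obtained when $F_0$ is the zero matrix. Two small observations are needed.

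First, I would check that setting $F_0 = 0$ does not affect the joint spectral radius, so that the $\rho$ of $F_1, \dots, F_{k-1}$ appearing in the corollary coincides with the $\rho$ of the full family $F_0, F_1, \dots, F_{k-1}$ that Proposition~\ref{pro:jsr-growth} describes. Any length-$\ell$ product drawn from $\set{F_0, F_1, \dots, F_{k-1}}$ that uses $F_0$ as a factor is the zero matrix, whose norm is $0$ and therefore does not change the value of $\max\setm{\norm{A_1\cdots A_\ell}^{1/\ell}}{A_i \in \set{F_0,\dots,F_{k-1}}}$. That maximum therefore coincides with the analogous maximum taken over $\set{F_1, \dots, F_{k-1}}$ for every $\ell$, and the limits as $\ell\to\infty$ agree.

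Second, I would observe that the auxiliary growth condition $F_0^s \in \Oh[\big]{(k^s)^{(\log_k\rho')+\varepsilon}}$ appearing in Proposition~\ref{pro:jsr-growth} is automatic under the assumption $F_0 = 0$: we have $F_0^s = 0$ for every $s\geq 1$, which is in $\Oh{1}$ and hence in $\Oh[\big]{(k^s)^\sigma}$ for every real $\sigma$. That auxiliary condition therefore drops out of the ``in particular'' clause of Proposition~\ref{pro:jsr-growth}, leaving exactly the equivalence claimed in the corollary.

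I do not expect any genuine obstacle. Proposition~\ref{pro:jsr-growth} introduces a separate clause for the powers of $F_0$ precisely to account for the contribution of trailing zero digits of $(n)_k$ to $F(n)$, and this clause becomes vacuous when $F_0$ itself vanishes; the corollary then reduces to a one-line application of the proposition.
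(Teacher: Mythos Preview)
Your proposal is correct and follows the same approach as the paper: observe that $F_0=0$ makes the auxiliary condition $F_0^s \in \Oh[\big]{(k^s)^{(\log_k\rho')+\varepsilon}}$ trivially true, then invoke the ``in particular'' clause of Proposition~\ref{pro:jsr-growth}. Your explicit verification that adjoining the zero matrix $F_0$ to $\set{F_1,\dots,F_{k-1}}$ does not alter the joint spectral radius is a detail the paper leaves implicit but is indeed needed to match the corollary's statement (which refers to the JSR of $F_1,\dots,F_{k-1}$) to the proposition's (which refers to the JSR of $F_0,\dots,F_{k-1}$).
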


We will use this corollary with $\rho'=1$
in the proof of Theorem~\ref{thm:unsolvable:poly} to connect polynomial growth with
the joint spectral radius~$\rho$.

\begin{proof}[Proof of Proposition~\ref{pro:jsr-growth}]
  In this proof, we suppose that $s$ and $n$ are related by $s =
  \floor{\log_k n} + 1$. Then, by Remark~\ref{rem:growth:linear-deviation},
  we have $\Oh{n^{\sigma}} = \Oh{k^{s\sigma}}$ as
  $n\to\infty$ for any $\sigma$.

  We have that for any fixed real  $\sigma$,
  \begin{equation*}
    \norm{F(n)} \in \Oh{n^{\sigma}}
    = \Oh{k^{s\sigma}}
  \end{equation*}
  as $n\to\infty$ is equivalent to
  \begin{equation}\label{eq:jsr-growth:max}
    \max_{k^{s-1} \leq n < k^{s}} \norm{F(n)} \in \Oh{k^{s\sigma}}
  \end{equation}
  as $s\to\infty$, because~$s$ is the same for all~$n$ within
  the given range of the argument of the maximum and the right-hand
  side~$\Oh{k^{s\sigma}}$ only depends on~$s$ (and not
  on~$n$).

  We set
  \begin{equation*}
    \rho_s \coloneqq
    \max_{n_0,\dots,n_{s-1}\in\set{0,\dots,k-1}}
    \norm{F_{n_0}\cdots F_{n_{s-1}}}^{1/s}
  \end{equation*}
  Then the bound~\eqref{eq:jsr-growth:max} together with
  $F_0^s \in \Oh{k^{s\sigma}}$ is
  equivalent to
  \begin{equation}\label{eq:jsr-growth:rho-ell}
    k^{s\log_k\rho_s} = \rho_s^s \in \Oh{k^{s\sigma}}
  \end{equation}
  as $s\to\infty$,
  because there is a constant~$c>0$ (only depending on the used norm)
  such that for all $n_0$, \dots, $n_{s-1}\in\set{0,\dots,k-1}$,
  there is either a largest index~$j\in\set{1,\dots,s}$ with $n_{j-1}\neq0$
  or $j=0$ and
  \begin{equation*}
    \norm{F_{n_0}\cdots F_{n_{s-1}}}
    \leq c \norm{F_{n_0}\cdots F_{n_{j-1}}} \cdot \norm{F_0^{s-j}}
    \in \Oh{k^{j\sigma}} \, \Oh{k^{(s-j)\sigma}}
    = \Oh{k^{s\sigma}}.
  \end{equation*}

  Consequently, the bound~\eqref{eq:jsr-growth:rho-ell} is
  equivalent to the existence of an $S\in\N_0$
  such that for all $s \geq S$, the inequality $\log_k\rho_s
  \leq \sigma$ holds. So much for our preliminary considerations.

  Now let $\varepsilon>0$. Then $F(n) \not\in
  \Oh[\big]{n^{(\log_k\rho)-\varepsilon}}$, $F(n) \in
  \Oh[\big]{n^{(\log_k\rho)+\varepsilon}}$ and  $F_0^s \in
  \Oh[\big]{(k^s)^{(\log_k\rho)+\varepsilon}}$ iff there is an $S\in\N_0$ such
  that for all $s \geq S$, the inequalities
  \begin{equation*}
    (\log_k\rho)-\varepsilon
    < \log_k\rho_s
    \leq (\log_k\rho)+\varepsilon
  \end{equation*}
  hold. But this is equivalent to $\log_k\rho = \lim_{s\to\infty}
  \log_k\rho_s$ and therefore equivalent to
  \begin{equation*}
    \rho = \lim_{s\to\infty} \rho_s,
  \end{equation*}
  which completes the proof of the equivalence.

  If $f(n)$ is as in the proposition, then by
  Proposition~\ref{pro:growth-matrices-vs-sequence}
  we have the equivalence of
  \begin{equation*}
    f(n) \in \Oh{n^{\sigma}}.
  \end{equation*}
  to
  \begin{equation*}
    \norm{F(n)} \in \Oh{n^{\sigma}}
    = \Oh{k^{s\sigma}}
  \end{equation*}
  as $n\to\infty$ for any fixed real algebraic $\sigma$, so it is allowed to
  replace $F(n)$ by $f(n)$ in our statements.
\end{proof}

\begin{proof}[Proof of Corollary~\ref{cor:jsr-growth}]
  As $F_0$ is the zero-matrix, $F_0^s \in \Oh[\big]{(k^s)^{(\log_k\rho')+\varepsilon}}$ as $s\to\infty$ holds trivially,
  and the result therefore follows from Proposition~\ref{pro:jsr-growth}.
\end{proof}

\begin{proof}[Proof of Theorem~\ref{thm:unsolvable:poly}]
  We reduce from the recursively unsolvable question whether the joint spectral radius of a set of
  matrices over the rationals is bounded by~$1$; see Blondel and
  Tsitsiklis~\cite{Blondel-Tsitsiklis:2000:boundedness-undecidable}.

  So let us assume the decision problem of Theorem~\ref{thm:unsolvable:poly}
  is recursively solvable.
  Let $F_1$, \dots, $F_{k-1}$ be square matrices over the rationals,
  let $F_0$ be the zero-matrix, and
  set $F(n) = F_{n_0} \cdots F_{n_{s-1}}$ for $(n)_k =n_{s-1} \cdots n_0$.
  By the reduction of Lemma~\ref{lem:property:k-regular-products},
  where property~$P$ is
  whether a sequence does not have at least polynomial growth,
  we can decide whether $F(n)$ does not have at least polynomial growth.
  This is equivalent to deciding whether
  for all $\sigma>0$ we have $F(n) \in \Oh{n^\sigma}$.
  By Corollary~\ref{cor:jsr-growth}
  with $\rho'=1$, this is equivalent to deciding whether
  the joint spectral radius $\rho$ of
  $F_1$, \dots, $F_{k-1}$ being at most $1$, a contradiction.
\end{proof}

\section{Images and preimages}

By \cite[Theorem~5.2]{Allouche-Shallit:1992:regular-sequences}, it is
undecidable whether a given $k$-regular sequence $(f(n))_{n \geq 0}$ has a zero
term, i.e., whether there exists an~$n\in\N_0$ with $f(n)=0$.

\subsection{Preimages}

In this section we use closure properties of regular languages without further comment.   See, for example, \cite{Hopcroft&Ullman:1979}.

\begin{theorem}\label{thm:unsolvable:preimage-dfa}
  Let $k\geq2$ be an integer and $\S\supseteq\N_0$ be a set.
  The decision problem
  \begin{quote}
    ``Given a $k$-regular sequence $(f(n))_{n \geq 0}$ over~$\S$ and a number $q\in\S$,
    is the language associated with $f^{-1}(q)$ regular, i.e., can it
    be recognized by a deterministic finite automaton?''
  \end{quote}
  is recursively unsolvable.
\end{theorem}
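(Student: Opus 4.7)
The plan is to reduce from Hilbert's tenth problem, following the same template already used in the proof of Theorem~\ref{thm:unsolvable:Omega-n}. For a multivariate polynomial $p$ in $t$ variables over~$\Z$, I would choose $r\in\N$ with $K=k^r>t+2$ and define
\begin{equation*}
  f(n) \coloneqq \bigl(\f{p}{\abs{z}_1,\dots,\abs{z}_t}\bigr)^2
  + \bigl(\abs{z}_{t+1}-\abs{z}_{t+2}\bigr)^2
\end{equation*}
with $z=(n)_K$. By Remark~\ref{rem:poly-regular}, applied to the integer polynomial $P(x_1,\dots,x_{t+2})=p(x_1,\dots,x_t)^2+(x_{t+1}-x_{t+2})^2$, this sequence is $K$-regular, hence $k$-regular by \cite[Theorem~2.9]{Allouche-Shallit:1992:regular-sequences}; its values lie in $\N_0\subseteq\S$, and I would ask about the preimage of $q=0$.

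Because $f(n)$ is a sum of two integer squares, $f(n)=0$ iff $\f{p}{\abs{z}_1,\dots,\abs{z}_t}=0$ and $\abs{z}_{t+1}=\abs{z}_{t+2}$ simultaneously. If $p$ has no nonnegative integer zero, then $f^{-1}(0)=\emptyset$, whose associated language is trivially regular. Otherwise, fix a solution $(x_1,\dots,x_t)\in\N_0^t$ and intersect the associated language $L$ with the regular set $R$ of canonical $K$-ary strings of the form $1^{x_1}2^{x_2}\cdots t^{x_t}w$ with $w\in\set{t+1,t+2}^*$. The intersection equals
\begin{equation*}
  L\cap R = \setm{1^{x_1}2^{x_2}\cdots t^{x_t}w}{w\in\set{t+1,t+2}^*,\ \abs{w}_{t+1}=\abs{w}_{t+2}},
\end{equation*}
and a further intersection with the regular set of strings $1^{x_1}\cdots t^{x_t}(t+1)^*(t+2)^*$ isolates the classical non-regular language $\setm{1^{x_1}\cdots t^{x_t}(t+1)^i(t+2)^i}{i\geq0}$, so $L\cap R$, and therefore $L$, is not regular. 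Combining the two cases yields the reduction.

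The subtlety I expect to need the most care is that the construction is phrased in base $K=k^r$, whereas the theorem concerns the language in base~$k$. This is handled by the routine observation that the canonical $K$-ary and $k$-ary representations of $n$ differ only by a digit-wise unpacking (each $K$-ary digit corresponds to a fixed-length block of $r$ base-$k$ digits, modulo stripping leading zeros from the most significant block), and this transformation preserves regularity in both directions. A minor edge case arises when $p(0,\dots,0)=0$ and the chosen solution has $x_1=\cdots=x_t=0$: then I would restrict $R$ to nonempty $w$ so that the leading digit (necessarily $t+1$ or $t+2$) is nonzero and the strings in $L\cap R$ remain valid canonical representations, leaving the non-regularity argument intact.
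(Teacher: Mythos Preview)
Your proposal is correct and follows essentially the same construction and argument as the paper's proof: the same auxiliary polynomial $f(n)=p(\abs{z}_1,\dots,\abs{z}_t)^2+(\abs{z}_{t+1}-\abs{z}_{t+2})^2$, the same case split on whether $p$ has a solution, and the same use of intersections with regular languages to isolate a non-regular $\{a^ib^i\}$-type set. The only differences are cosmetic (the paper uses $(t+1)^+(t+2)^+$ and a left quotient rather than $(t+1)^*(t+2)^*$), and you are in fact more explicit than the paper about the base-$K$ versus base-$k$ regularity-preservation and the leading-zero edge case.
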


Above, the language associated with $f^{-1}(q)$ is $\setm{(n)_k}{f(n)=q}$, where
$(n)_k$ is the standard $k$-ary expansion of $n$.

A result from \cite[Theorem~5.3]{Allouche-Shallit:1992:regular-sequences} states
that there exists a $k$-regular sequence $(f(n))_{n\ge0}$ such that neither
$\setm{(n)_k}{f(n)=0}$ nor $\setm{(n)_k}{f(n)\neq0}$ are context-free.

\begin{proof}[Proof of Theorem~\ref{thm:unsolvable:preimage-dfa}]
  We can assume that $q=0$ by subtracting $q$ from the
  $k$-regular sequence. In order to prove the theorem,
  we reduce from Hilbert's tenth problem.

  For a given multivariate polynomial~$p$ in $t$~variables over~$\Z$,
  we choose $r\in\N$ such that $K=k^r > t+2$, and 
  we construct
  \begin{equation}\label{eq:preimage-dfa:f}
    f(n) =  \bigl(\f{p}{\abs{z}_1, \abs{z}_2, \dots, \abs{z}_t}\bigr)^2
    + \bigl(\abs{z}_{t+1} - \abs{z}_{t+2}\bigr)^2,
  \end{equation}
  where $z = (n)_K$.
  The sequence $(f(n))_{n \geq 0}$ is $K$-regular (see Remark~\ref{rem:poly-regular})
  and therefore $k$-regular as well
  by~\cite[Theorem 2.9]{Allouche-Shallit:1992:regular-sequences}.

  The following claim shows that the above indeed provides a reduction.

  \begin{claim*}
    The set $f^{-1}(0)$ is not recognized by a deterministic finite automaton
    iff there exist
    nonnegative integers $x_1$, $x_2$, \ldots, $x_t$ such that
    $p(x_1, \dots, x_t) = 0$.
  \end{claim*}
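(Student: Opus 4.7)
The plan is to exploit the sum-of-squares structure of $f$. Since both summands in~\eqref{eq:preimage-dfa:f} are squares of integers, $f(n)=0$ holds exactly when $p(\abs{z}_1,\dots,\abs{z}_t)=0$ \emph{and} $\abs{z}_{t+1}=\abs{z}_{t+2}$, where $z=(n)_K$. Because $K=k^r$, a subset of $\N_0$ has base-$k$ representations forming a regular language iff its base-$K$ representations do (via the standard block-of-$r$ simulation of DFAs), so it suffices to analyse
\[
  L = \setm{(n)_K}{f(n)=0}.
\]

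For the easy direction: if $p$ admits no solution in $\N_0^t$, then $p(\abs{z}_1,\dots,\abs{z}_t)\neq 0$ for every $n$, hence $f(n)\geq 1$ always and $L=\emptyset$ is trivially regular. Contrapositively, if $L$ is not DFA-recognizable, then $p$ must have a nonnegative integer solution.

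For the hard direction, assume $(x_1,\dots,x_t)\in\N_0^t$ satisfies $p(x_1,\dots,x_t)=0$. I would exhibit a regular ``witness'' language $R$ whose intersection with $L$ is a classical $a^m b^m$-type language, then combine the pumping lemma with closure of regular languages under intersection. Since $K>t+2$, the symbols $1,2,\dots,t+2$ are distinct nonzero base-$K$ digits, so
\[
  R = (t+1)^{+}\, 1^{x_1} 2^{x_2} \cdots t^{x_t}\, (t+2)^{*}
\]
is regular and consists entirely of valid leading-zero-free base-$K$ representations. For any $w = (t+1)^a\, 1^{x_1}\cdots t^{x_t}\,(t+2)^b \in R$ the digit counts are $\abs{w}_i=x_i$ for $1\leq i\leq t$, $\abs{w}_{t+1}=a$, and $\abs{w}_{t+2}=b$. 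The $p$-factor vanishes by the choice of the $x_i$, so $w\in L$ iff $a=b$; thus
\[
  L \cap R = \setm[\big]{(t+1)^a\, 1^{x_1}\cdots t^{x_t}\,(t+2)^a}{a\geq 1}.
\]
A one-line pumping argument, applied to the word $(t+1)^P\, 1^{x_1}\cdots t^{x_t}\,(t+2)^P$ for pumping length $P$, forces every pumpable factor to lie in the initial block of $(t+1)$'s; pumping then breaks the equality $a=b$ and shows $L\cap R$ is non-regular. Hence $L$ itself is non-regular.

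The only real obstacle is the design of $R$: it must pin down the counts of the digits $1,\dots,t$ (so the polynomial condition is automatically enforced), leave the counts of exactly two further digits free (so a matched-count condition remains after the intersection), and use only valid nonzero base-$K$ symbols in order to avoid leading-zero issues. The hypothesis $K>t+2$ is precisely what supplies enough room for the three groups of distinct ``markers'' that this strategy needs.
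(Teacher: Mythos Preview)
Your argument is correct and follows essentially the same approach as the paper: both proofs intersect $L$ with a regular language that pins down the counts of the digits $1,\dots,t$ to a fixed solution and leaves the counts of $t+1$ and $t+2$ free, reducing to a classical matched-count language. The only cosmetic differences are that the paper places the $(t+1)^{+}$ block after $t^{x_t}$ and finishes via a left quotient rather than the pumping lemma; your explicit remark on the base-$k$/base-$K$ equivalence is, if anything, a bit more careful than the paper's treatment.
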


  If $p(x_1, \dots, x_t) = 0$ has no solution in $\N_0^t$, then $f(n)
  \neq 0$ by its construction. Thus $f^{-1}(0)$ is the empty set and
  is accepted by a deterministic finite automaton.

  Otherwise, suppose we have nonnegative integers $x_1$, \dots, $x_t$ with
  $p(x_1, \dots, x_t) = 0$, and suppose the language $L=\setm{(n)_K}{f(n)=0}$
  is accepted by a deterministic finite automaton, i.e.,
  $L$ is regular.
  We note that each $z=(n)_K \in L$ satisfies $\abs{z}_{t+1}=\abs{z}_{t+2}$
  as $f(n)=0$ and this is equivalent to
  both squares in its definition~\eqref{eq:preimage-dfa:f} being zero.
  Moreover, for each $s\in\N_0$, there is a $z\in L$ with
  $s=\abs{z}_{t+1}=\abs{z}_{t+2}$.

  As $L$ is regular, so is
  \begin{equation*}
    L_1 = L \cap 1^{x_1} 2^{x_2} \cdots t^{x_t} (t+1)^+ (t+2)^+
  \end{equation*}
  where $d^+=\set{d,dd,ddd,\ldots}$ for a letter~$d$.
  Furthermore, the left quotient
  \begin{equation*}
    L_2 =\set{1^{x_1}2^{x_2} \cdots t^{x_t} (t+1)}^{-1} L_1
  \end{equation*}
  is regular. However,
  \begin{equation*}
    L_2 = \setm{(t+1)^{m-1} (t+2)^m}{m\geq1},
  \end{equation*}
  which is not regular;
  see~\cite[Examples 2.8 and 5.2]{Eilenberg:1974:automata-languages-machines-A}.
  This contradiction proves the desired result.

  Now, if we can decide whether the language associated with $f^{-1}(q)$
  is regular, then we can decide whether a solution of $p$ exists, and
  therefore decide Hilbert's tenth problem. This completes the proof of
  Theorem~\ref{thm:unsolvable:preimage-dfa}.
\end{proof}

\subsection{Images}

\begin{theorem}\label{thm:unsolvable:image-N}
  Let $k\geq2$ be an integer and $\S\supseteq\N_0$ be a set.
  The decision problem
  \begin{quote}
    ``Given a $k$-regular sequence $(f(n))_{n \geq 0}$ over~$\S$,
    is $\setm{f(n)}{n \in \N_0} = \N_0$?''
  \end{quote}
  is recursively unsolvable.
\end{theorem}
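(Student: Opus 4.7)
My plan is to reduce from Hilbert's tenth problem, following the template used in Theorems~\ref{thm:unsolvable:Omega-n} and~\ref{thm:unsolvable:preimage-dfa}. Given a multivariate polynomial $p$ in $t$ variables over~$\Z$, I would pick $r\in\N$ with $K=k^r>t+1$, so that $t+1$ is a valid nonzero digit in base~$K$, and construct the sequence
\begin{equation*}
  f(n) = \bigl(\f{p}{\abs{z}_1, \abs{z}_2, \ldots, \abs{z}_t}\bigr)^2 + \abs{z}_{t+1},
\end{equation*}
where $z=(n)_K$. By Remark~\ref{rem:poly-regular} and the closure of $k$-regular sequences under sum, this sequence is $K$-regular, hence $k$-regular by~\cite[Theorem~2.9]{Allouche-Shallit:1992:regular-sequences}. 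Since $f(n)\in\N_0\subseteq\S$ for every $n$, only the image as a subset of $\N_0$ needs to be analyzed.

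The key claim is that $\setm{f(n)}{n\in\N_0}=\N_0$ if and only if $p$ has a zero in~$\N_0^t$. In one direction, if $p$ has no such zero, then $\f{p}{x_1,\ldots,x_t}^2\geq 1$ for all $(x_1,\ldots,x_t)\in\N_0^t$, so $f(n)\geq 1$ for every $n$ (including $n=0$), and hence $0$ is not in the image. Conversely, suppose $\f{p}{x_1,\ldots,x_t}=0$ has a solution $(x_1,\ldots,x_t)\in\N_0^t$. Then for every target $m\in\N_0$ I would exhibit a word $z$ over $\set{0,1,\ldots,K-1}$ without leading zero satisfying $\abs{z}_i=x_i$ for $1\leq i\leq t$ and $\abs{z}_{t+1}=m$; the canonical choice $z=(t+1)^m\,1^{x_1}\,2^{x_2}\cdots t^{x_t}$ works, starting with the nonzero digit $t+1$ when $m\geq 1$ and with the least index~$i$ having $x_i>0$ when $m=0$. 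The degenerate case where $m=0$ and all $x_i$ vanish corresponds to the empty word, i.e., $n=0$, and gives $f(0)=0$. For the $n$ decoded from such a $z$ we then have $f(n)=0+m=m$, so every nonnegative integer appears in the image.

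Since an algorithm for the image problem would thus decide Hilbert's tenth problem, the theorem follows. I do not anticipate any essential obstacle: the construction is a mild simplification of those already used in Section~\ref{sec:lower-bound}, and the only delicate point is verifying that valid base-$K$ representations with the required digit counts exist for each prescribed value of $\abs{z}_{t+1}$, which is immediate from $K>t+1$.
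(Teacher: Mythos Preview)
Your proof is correct, but it follows a genuinely different construction from the paper's.

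The paper uses a perfect shuffle: it sets
\[
  f(n) =
  \begin{cases}
    n/2 + 1, & n \text{ even};\\
    \bigl(p(\abs{z}_1,\ldots,\abs{z}_t)\bigr)^2, & n \text{ odd},
  \end{cases}
\]
with $z=\bigl((n-1)/2\bigr)_K$, invoking the closure of $k$-regular sequences under shuffle (\cite[Theorem~2.7]{Allouche-Shallit:1992:regular-sequences}). The even positions already cover all of~$\N$, so the image equals~$\N_0$ precisely when some odd position contributes~$0$, i.e., when $p$ has a zero. Your approach instead encodes the entire image through a single additive formula $p^2+\abs{z}_{t+1}$, using the extra digit~$t+1$ as a free parameter that sweeps out all of~$\N_0$ once $p^2$ vanishes. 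This avoids the shuffle closure property altogether and stays entirely within Remark~\ref{rem:poly-regular}, at the price of the small case analysis on leading digits you carry out. Both routes are short; yours is closer in spirit to the constructions of Theorems~\ref{thm:unsolvable:Omega-n} and~\ref{thm:unsolvable:preimage-dfa}, while the paper's makes the ``image equals~$\N_0$'' verification essentially immediate.
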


\begin{proof}
  In order to prove the theorem,
  we again reduce from Hilbert's tenth problem.
  For a given multivariate polynomial~$p$ in $t$~variables over~$\Z$,
  we choose $r\in\N$ such that $K=k^r > t$, and 
  we construct
  \begin{equation*}
    f(n) = 
    \begin{cases}
      n/2 + 1, & \text{if $n$ is even;} \\
      \bigl(\f{p}{\abs{z}_1, \abs{z}_2, \dots, \abs{z}_t}\bigr)^2, &
      \text{if $n$ is odd},
    \end{cases}
  \end{equation*}
  where $z = \bigl((n-1)/2\bigr)_K$.
  The sequence $(f(n))_{n \geq 0}$ is $K$-regular (Remark~\ref{rem:poly-regular}
  and \cite[Theorem~2.7]{Allouche-Shallit:1992:regular-sequences})
  and therefore $k$-regular as well
  by~\cite[Theorem 2.9]{Allouche-Shallit:1992:regular-sequences}.

  Once we have shown the following claim, we have a reduction to
  Hilbert's tenth problem and therefore the proof of
  Theorem~\ref{thm:unsolvable:image-N} is completed.

  \begin{claim*}
    The set $\setm{f(n)}{n \in \N_0}$ equals $\N_0$
    iff there exist
    nonnegative integers $x_1$, $x_2$, \dots, $x_t$ such that
    $p(x_1, \dots, x_t) = 0$.
  \end{claim*}

  If $p(x_1, \dots, x_t) = 0$ has no solution in $\N_0^t$, then $f(n)
  \neq 0$ by its construction. Thus $0$ is not in the set
  $\setm{f(n)}{n \in \N_0}$, so this set cannot be equal to $\N_0$.

  Otherwise, suppose we have nonnegative integers
  $x_1, \ldots, x_t$ with $p(x_1, \dots, x_t) = 0$, then there exists
  an $n\in\N_0$ with $f(n)=0$. As $\setm{f(n)}{\text{$n\in\N_0$ is even}}$
  already contains all the positive integers, all nonnegative integers appear
  as a value $f(n)$ somewhere.
\end{proof}

\begin{theorem}\label{thm:unsolvable:image-Z}
  Let $k\geq2$ be an integer and $\S\supseteq\Z$ be a set.
  The decision problem
  \begin{quote}
    ``Given a $k$-regular sequence $(f(n))_{n \geq 0}$ over~$\S$,
    is $\setm{f(n)}{n \in \N_0} = \Z$?''
  \end{quote}
  is recursively unsolvable.
\end{theorem}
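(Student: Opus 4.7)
The plan is to reduce once more from Hilbert's tenth problem, building directly on the construction used in the proof of Theorem~\ref{thm:unsolvable:image-N}. Recall that the sequence $(g(n))_{n\ge0}$ built there is $k$-regular with values in $\N_0$, and its image $\setm{g(n)}{n\in\N_0}$ equals $\N_0$ when the input polynomial $p$ has a zero in $\N_0^t$, and equals $\N$ otherwise. The single value on which these two images differ is $0$, which is exactly the gap controlling the reduction.

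To push this into the $\Z$ setting, I would define $(f(n))_{n\ge0}$ as the perfect shuffle of $(g(n))_{n\ge0}$ and $(-g(n))_{n\ge0}$, i.e.,
\[ f(2n) = g(n), \qquad f(2n+1) = -g(n) \qquad \text{for all } n\in\N_0. \]
Since $\Z\subseteq\S$, the sequence $(-g(n))_{n\ge0}$ takes values in~$\S$ and is $k$-regular by closure under scalar multiplication, and the shuffle is $k$-regular by closure property~(d) from the closure theorem. Hence $(f(n))_{n\ge0}$ is an effectively constructible $k$-regular sequence over~$\S$.

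The image of $f$ is $\setm{g(n)}{n\in\N_0}\cup\setm{-g(n)}{n\in\N_0}$. If $p$ has a zero in $\N_0^t$, this union is $\N_0\cup(-\N_0)=\Z$; otherwise it is $\N\cup(-\N)=\Z\setminus\set{0}$, a proper subset of~$\Z$. Consequently, an algorithm deciding whether $\setm{f(n)}{n\in\N_0}=\Z$ would decide Hilbert's tenth problem, a contradiction. I do not anticipate a real obstacle once Theorem~\ref{thm:unsolvable:image-N} is available: shuffling with the negated sequence is designed precisely to populate the negative integers unconditionally, while letting the status of $0$ continue to track whether $p$ has a nonnegative integer zero.
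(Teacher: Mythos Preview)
Your proof is correct. Both your argument and the paper's reduce from Hilbert's tenth problem by building a $k$-regular sequence whose image is $\Z$ exactly when the given polynomial $p$ has a nonnegative integer zero, the crux being that $\Z\setminus\set{0}$ is covered unconditionally while $0$ is covered iff $p$ vanishes. The implementations differ, however: the paper builds the sequence from scratch as a $3$-way shuffle---one strand producing the positive integers, one the negative integers, and one the values of $p$ itself (not $p^2$)---whereas you recycle the sequence $g$ constructed in Theorem~\ref{thm:unsolvable:image-N} and take the $2$-way shuffle of $g$ with $-g$. Your route is more modular, invoking Theorem~\ref{thm:unsolvable:image-N} as a black box; the paper's is self-contained and avoids stacking two shuffle layers. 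One cosmetic point: scalar multiplication is not among the closure properties listed explicitly, but negation follows at once from term-by-term multiplication with the constant sequence $-1$ via closure property~(b), so there is no real gap.
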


\begin{proof}
  The proof runs along the same lines as the proof of Theorem~\ref{thm:unsolvable:image-N}, but
  for a given multivariate polynomial~$p$ in $t$~variables over~$\Z$,
  we choose $r\in\N$ such that $K=k^r > t$, and 
  we construct
  \begin{equation*}
    f(n) = 
    \begin{cases}
      n/3 + 1, & \text{if $n \equiv 0 \pmod 3$;} \\
      -(n-1)/3 - 1, & \text{if $n \equiv 1 \pmod 3$;} \\
      \f{p}{\abs{z}_1, \abs{z}_2, \dots, \abs{z}_t}, &
      \text{if $n \equiv 2 \pmod 3$,}
    \end{cases}
  \end{equation*}
  where $z = \bigl((n-2)/3\bigr)_K$.
  Then the set $\setm{f(n)}{n \in \N_0}$ equals $\Z$ iff there exist
  nonnegative integers $x_1$, $x_2$, \dots, $x_t$ such that
  $p(x_1, \dots, x_t) = 0$.
\end{proof}

We can extend the above to the question whether two $k$-regular
sequences have the same image.

\begin{corollary}\label{cor:unsolvable:image-Z}
  Let $k\geq2$ be an integer and $\S\supseteq\N_0$ be a set.
  The decision problem
  \begin{quote}
    ``Given two $k$-regular sequences $(f(n))_{n \geq 0}$ and $(g(n))_{n \geq 0}$ over~$\S$,
    do their images coincide, i.e.,
    is $\setm{f(n)}{n \in \N_0} = \setm{g(n)}{n \in \N_0}$?''
  \end{quote}
  is recursively unsolvable.
\end{corollary}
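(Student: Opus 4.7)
The plan is to reduce directly from Theorem~\ref{thm:unsolvable:image-N}. Given a $k$-regular sequence $(f(n))_{n\geq0}$ over~$\S$ for which we wish to decide whether $\setm{f(n)}{n\in\N_0}=\N_0$, I would pair it with the $k$-regular sequence $g(n)=n$ and invoke any hypothetical image-coincidence algorithm on the pair $(f,g)$.

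First, I would observe that $(n)_{n\geq0}$ really is $k$-regular: the identity $n = k\floor{n/k} + (n \bmod k)$ yields a rank-$2$ linear representation (the kernel is spanned by $(n)_{n\geq0}$ and the constant sequence $1$), and its image is evidently all of $\N_0$. Hence $g$ qualifies as a valid second input to the decision problem of the corollary.

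Second, I would reduce Theorem~\ref{thm:unsolvable:image-N} to the corollary's problem as follows. The $f$ produced by the construction in the proof of Theorem~\ref{thm:unsolvable:image-N} satisfies $\setm{f(n)}{n\in\N_0}\subseteq\N_0$ by inspection (both branches of the piecewise definition, $n/2+1$ and $p(\cdots)^2$, lie in $\N_0$). Since $\setm{g(n)}{n\in\N_0}=\N_0$, the two images coincide if and only if $\setm{f(n)}{n\in\N_0}=\N_0$. Thus a decision procedure for the corollary would decide the problem of Theorem~\ref{thm:unsolvable:image-N}, a contradiction.

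I expect no substantive obstacle here, since Theorem~\ref{thm:unsolvable:image-N} already does all the hard work. The only subtlety is the inclusion $\setm{f(n)}{n\in\N_0}\subseteq\N_0$, needed to turn image-equality with $g$ into image-equality with $\N_0$; this is guaranteed by the explicit construction used in the proof of Theorem~\ref{thm:unsolvable:image-N}.
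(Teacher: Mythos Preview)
Your proof is correct and follows the same approach as the paper: reduce from Theorem~\ref{thm:unsolvable:image-N} by pairing the given~$f$ with $g(n)=n$. One small remark: the ``subtlety'' you flag about $\setm{f(n)}{n\in\N_0}\subseteq\N_0$ is not actually needed, since $\setm{g(n)}{n\in\N_0}=\N_0$ already makes the equivalence $\setm{f(n)}{n\in\N_0}=\setm{g(n)}{n\in\N_0}\iff\setm{f(n)}{n\in\N_0}=\N_0$ a tautology.
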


\begin{proof}
  We reduce from the decision problem of
  Theorem~\ref{thm:unsolvable:image-N}, so let $(f(n))_{n \geq 0}$ be a $k$-regular
  sequence over~$\S$ and set~$g(n)=n$. If we can decide whether these
  two sequences have the same image, then we decide whether
  \begin{equation*}
    \setm{f(n)}{n \in \N_0} = \setm{g(n)}{n \in \N_0} = \N_0,
  \end{equation*}
  which contradicts Theorem~\ref{thm:unsolvable:image-N}.
\end{proof}

\begin{theorem}\label{thm:unsolvable:value-twice}
  Let $k\geq2$ be an integer and $\S\supseteq\N_0$ be a set.
  The decision problem
  \begin{quote}
    ``Given a $k$-regular sequence $(f(n))_{n \geq 0}$ over~$\S$,
    does $f(n)$ take the same value twice?''
  \end{quote}
  is recursively unsolvable.
\end{theorem}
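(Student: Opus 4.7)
The plan is to reduce from Hilbert's tenth problem, in the style of the reductions in Theorems~\ref{thm:unsolvable:Omega-n}, \ref{thm:unsolvable:preimage-dfa}, \ref{thm:unsolvable:image-N}, and~\ref{thm:unsolvable:image-Z}. Given a multivariate polynomial~$p$ in $t$~variables over~$\Z$, one chooses $r\in\N$ with $K=k^r$ strictly larger than $t+1$ (so that at least one spare digit class beyond $\set{1,\ldots,t}$ is available), and constructs a $k$-regular sequence $(f(n))_{n\ge0}$ over~$\N_0$ such that $f$ takes the same value twice iff $p$ has a solution in~$\N_0^t$.

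The principal building block is the sequence $(p(\abs{z}_1,\ldots,\abs{z}_t)^2)_{n\ge0}$ with $z=(n)_K$, which is $k$-regular by Remark~\ref{rem:poly-regular} and vanishes precisely at those indices whose digit counts solve~$p$. One combines this with an injective carrier via a perfect shuffle into disjoint residue classes (for example, even and odd), so that all cross-collisions are automatically ruled out by parity. A natural first attempt is
\[
  f(2m) = 2m+1, \qquad f(2m+1) = 2(m+1)\cdot p(\abs{z}_1,\ldots,\abs{z}_t)^2 \text{ with } z=(m)_K.
\]
This sequence takes values in $\N_0$, is $k$-regular by the closure properties, and satisfies $f(2m+1)=0$ precisely when $p$ vanishes at the digit counts of~$m$. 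If $p$ has an~$\N_0$-solution, the spare digit class supplies infinitely many $m\ge1$ realising the required digit counts, and each such $m$ contributes a zero of $f$, producing a collision at value~$0$.

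The converse direction---that $f$ is injective when $p$ has no solution---is the main obstacle. The naive construction above does not suffice: for instance, with $p(x_1)=x_1^2-2$ (which has no $\N_0$-solution) and $K=3$ one already finds $f(1)=f(7)=8$, because $(0+1)\cdot p(0)^2 = (3+1)\cdot p(1)^2 = 4$. Such accidental coincidences must be eliminated by a refined construction. A natural remedy is to insert an additional polynomial tiebreaker built from the auxiliary digit counts $\abs{z}_{t+1},\ldots,\abs{z}_{K-1}$---in the spirit of the $(\abs{z}_{t+1}-\abs{z}_{t+2})^2$ device used in the proof of Theorem~\ref{thm:unsolvable:preimage-dfa}---so that any would-be collision of $f$ forces both $p$ itself and the tiebreaker to vanish, and hence forces $p$ to have a $\N_0$-solution. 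Designing the tiebreaker and verifying injectivity uniformly in~$p$ is the combinatorial crux of the proof.
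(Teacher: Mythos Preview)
Your proposal has a genuine gap that you yourself diagnose: the construction you write down is \emph{not} injective when $p$ has no solution, and you do not actually build or verify the promised tiebreaker. Saying that ``designing the tiebreaker and verifying injectivity uniformly in~$p$ is the combinatorial crux'' is an accurate assessment of what remains, but it means the proof is not finished. It is not clear that any polynomial expression in the auxiliary digit counts $\abs{z}_{t+1},\ldots,\abs{z}_{K-1}$ will separate all pairs of indices with the same value of $p(\abs{z}_1,\ldots,\abs{z}_t)^2$ without itself creating new collisions; you would have to argue this for arbitrary~$p$, and you have not done so.

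The paper avoids this difficulty entirely by a different and much simpler idea. Instead of shuffling with an injective carrier and then patching collisions, set
\[
  g(m)=\bigl(p(\abs{z}_1,\ldots,\abs{z}_t)\bigr)^2,\qquad z=(m)_K,
  \qquad\text{and}\qquad
  f(n)=\sum_{0\le m<n} g(m).
\]
Then $(f(n))_{n\ge0}$ is $k$-regular (partial sums of a $k$-regular sequence), takes values in~$\N_0$, and is \emph{nondecreasing}. If $p$ has no $\N_0$-solution then every $g(m)\ge 1$, so $f$ is strictly increasing and hence injective---no tiebreaker needed. If $p$ does have a solution then some $g(n)=0$, giving $f(n)=f(n+1)$. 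The monotonicity of a running sum of nonnegative terms is exactly the device that replaces your missing injectivity argument.
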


\begin{proof}
  In order to prove the theorem,
  we reduce from Hilbert's tenth problem.
  For a given multivariate polynomial~$p$ in $t$~variables over~$\Z$,
  we choose $r\in\N$ such that $K=k^r > t$, and 
  we construct
  \begin{equation*}
    g(m) = 
    \bigl(\f{p}{\abs{z}_1, \abs{z}_2, \dots, \abs{z}_t}\bigr)^2 
  \end{equation*}
  where $z = (m)_K$ and
  \begin{equation*}
    f(n) = \sum_{0 \leq m < n} g(m)
  \end{equation*}
  The sequence $(f(n))_{n \geq 0}$ is $K$-regular (Remark~\ref{rem:poly-regular}
  and \cite[Theorem~3.1]{Allouche-Shallit:1992:regular-sequences})
  and therefore $k$-regular as well
  by~\cite[Theorem 2.9]{Allouche-Shallit:1992:regular-sequences}.

  Once we have shown the following claim, we have a reduction to
  Hilbert's tenth problem and therefore the proof of
  Theorem~\ref{thm:unsolvable:value-twice} is completed.

  \begin{claim*}
    The sequence~$(f(n))_{n \geq 0}$ takes the same value twice
    iff there exist
    nonnegative integers $x_1$, $x_2$, \ldots, $x_t$ such that
    $p(x_1, \dots, x_t) = 0$.
  \end{claim*}

  If $p(x_1, \dots, x_t) = 0$ has no solution in $\N_0^t$, then $g(m)$
  is strictly positive, and therefore $f(n)$ strictly increasing.
  So no value is taken twice.

  Otherwise, suppose we have nonnegative integers
  $x_1$, \dots, $x_t$ with $p(x_1, \dots, x_t) = 0$, then there exists
  an $n\in\N_0$ with $g(n)=0$, and so $f(n) = f(n+1)$.
\end{proof}

\section{Squares and other \texorpdfstring{$\alpha$}{alpha}-powers}

Given a sequence $(f(n))_{n \geq 0}$ and an integer~$\alpha \ge 2$, an 
{\it $\alpha$-power\/} is a nonempty contiguous
subsequence $(f(j))_{i \leq j < i+\alpha m}$ of length $\alpha $,
for some $i$ and $m$, such that
$f(i+t) = f(i+sm+t)$ for all $0 \leq s < \alpha$ and $0 \leq t < m$.
We call a $2$-power a {\it square}.
For example, the fractional part of the 
decimal representation of $e$ contains the
square $18281828$.

A {\it palindrome\/} is a nonempty contiguous subsequence
that reads the same forwards and backwards.  A palindrome is nontrivial if it is of length $\geq 2$.

For automatic sequences, the presence of squares, higher powers, and nontrivial palindromes is decidable
(see, e.g., \cite{Charlier&Rampersad&Shallit:2012}).  We now show
that, in contrast, the existence of these patterns is undecidable for
$k$-regular sequences.

\begin{theorem}\label{thm:undecide-squares}
  Let $\alpha \ge 2$ be an integer. The decision problem
  \begin{quote}
    ``Given a $k$-regular sequence $(f(n))_{n \geq 0}$, does
    $(f(n))_{n \geq 0}$ contain an $\alpha$-power?''
  \end{quote}
  is recursively unsolvable.
\end{theorem}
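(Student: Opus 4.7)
My plan is to reduce from Hilbert's tenth problem, in the spirit of Theorems~\ref{thm:unsolvable:Omega-n}, \ref{thm:unsolvable:preimage-dfa}, and~\ref{thm:unsolvable:value-twice}. Given a polynomial $p\in\Z[x_1,\dots,x_t]$, I pick $r\in\N$ with $K=k^r > t+1$ and set $g(m)=\bigl(\f{p}{\abs{z}_1,\dots,\abs{z}_t}\bigr)^2$ for $z=(m)_K$; by Remark~\ref{rem:poly-regular} this $g$ is $K$-regular, hence $k$-regular. Since $g(m)$ is the square of an integer, either $g(m)=0$ or $g(m)\geq 1$, and $g$ vanishes somewhere iff $p$ has a zero in~$\N_0^t$.

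The construction: choose the smallest $\ell\in\N_0$ with $M\coloneqq k^\ell\geq\alpha-1$ (which exists for every $\alpha,k\geq 2$) and define
\begin{equation*}
  g''(n) \coloneqq g(\floor{n/M}), \qquad f(n) \coloneqq \sum_{0\leq i<n} g''(i).
\end{equation*}
The sequence $g''$ is $k$-regular: a short $k$-kernel calculation (writing $n=k^{\ell-e}q+r$ with $0\leq r<k^{\ell-e}$) shows that, for $e\leq\ell$, the subsequence $(g''(k^e n+i))_{n\geq 0}$ equals the auxiliary sequence $(g(\floor{n/k^{\ell-e}}))_{n\geq 0}$ independently of~$i$, while for $e\geq\ell$ it lies in the $k$-kernel of~$g$. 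Thus the $k$-kernel of $g''$ is contained in the module generated by the $k$-kernel of~$g$ together with finitely many auxiliary sequences, and this is finitely generated. The partial sum~$f$ is then $k$-regular by \cite[Theorem~3.1]{Allouche-Shallit:1992:regular-sequences}.

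I then claim that $(f(n))_{n\geq 0}$ contains an $\alpha$-power iff $p$ has a zero in~$\N_0^t$. For the forward direction, if $p(x_1,\dots,x_t)=0$ with $x_i\in\N_0$, pick any $m_0$ whose base-$K$ expansion has exactly $x_i$ occurrences of digit~$i$ for each $i\in\set{1,\dots,t}$ (possible because $K>t+1$). Then $g(m_0)=0$, so $g''$ vanishes on the $M$ consecutive positions $Mm_0,Mm_0+1,\dots,Mm_0+M-1$, which forces $f(Mm_0)=f(Mm_0+1)=\cdots=f(Mm_0+M)$. These $M+1\geq\alpha$ consecutive equal values contain an $\alpha$-power of period~$1$ and block length~$1$. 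Conversely, if $p$ has no zero in~$\N_0^t$ then $g\geq 1$ everywhere, so $g''\geq 1$ everywhere, so $f$ strictly increases by at least~$1$ at each step; hence $f$ is injective and contains no $\alpha$-power at all. The main subtlety is the $k$-regularity of~$g''$, which is why $M$ must be a power of the \emph{same} base~$k$ (not of~$K=k^r$): then $\floor{n/M}$ is exactly the truncation of the lowest $\ell$ base-$k$ digits, which is naturally compatible with base-$k$ kernel computations. The remaining steps use only standard closure properties.
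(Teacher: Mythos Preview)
Your proof is correct, and the underlying strategy coincides with the paper's: build a nonnegative $k$-regular sequence whose partial sums are nondecreasing, arrange that the sequence has at least $\alpha-1$ consecutive zeros iff the source problem has a positive instance, and observe that a nondecreasing sequence contains an $\alpha$-power iff it has $\alpha$ consecutive equal values.

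Two implementation choices differ from the paper. First, the paper reduces directly from the (already known undecidable) problem ``does a given $k$-regular sequence have a zero term?''\ rather than going back to Hilbert's tenth problem; your detour through a polynomial~$p$ and the digit-count sequence $g(m)=p(\abs{z}_1,\dots,\abs{z}_t)^2$ is correct but unnecessary, since any $k$-regular $h$ could be squared and fed into your construction. Second, to turn a single zero into $\alpha-1$ consecutive zeros, the paper uses a sliding window product
\[
g(n)-g(n-1)=f(n-1)^2 f(n-2)^2\cdots f(n-\alpha+1)^2,
\]
so that $f(n)=0$ forces $g(n)=g(n+1)=\cdots=g(n+\alpha-1)$; this needs only the standard closure of $k$-regular sequences under shift, termwise product, and partial summation. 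Your alternative---stretching by a factor $M=k^\ell\ge\alpha-1$ via $g''(n)=g(\floor{n/M})$---also works, but requires the extra (correct) $k$-kernel argument you supplied. The paper's route is a bit shorter; yours makes the block of equal values more explicit. Both are valid.
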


\begin{proof}
We reduce from the problem of deciding whether a $k$-regular sequence has a $0$ term.

Given a $k$-regular sequence $(f(n))_{n \geq 0}$ for which we want to decide
whether $f(n) = 0$ for some $n$, we can (effectively) transform it to the $k$-regular
sequence $(g(n))_{n \geq 0}$ defined recursively by $g(0)=1$ and
\begin{equation*}
  g(n) = g(n-1) + f(n-(\alpha-1))^2 \cdots f(n-2)^2 \, f(n-1)^2,
  \quad\text{for $n\ge1$}.
\end{equation*}
(Note that we use the convention $f(-i)=1$ for $i\ge1$.)
For squares, this simplifies to the explicit formula
$g(n) = 1 + f(0)^2 + \cdots + f(n-1)^2$.
Then $(g(n))_{n \geq 0}$ is (not necessarily strictly) increasing, so it contains an $\alpha$-power
iff there exists $n \ge 0$ such that $g(n) = g(n+1) = \cdots = g(n+\alpha-1)$.
But this occurs iff $f(n)$ = 0.
\end{proof}

Using the same technique, we can prove following theorem for palindromes.

\begin{theorem}
The decision problem
  \begin{quote}
    ``Given a $k$-regular sequence $(f(n))_{n \geq 0}$, does
    $(f(n))_{n \geq 0}$ contain a nontrivial palindrome?''
  \end{quote}
  is recursively unsolvable.
\end{theorem}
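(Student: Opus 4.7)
The plan is to mimic the reduction of Theorem~\ref{thm:undecide-squares} almost verbatim, using the square ($\alpha=2$) variant of the construction, and observe that a nontrivial palindrome in a monotone sequence already forces a repeated value.

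Concretely, I would reduce from the problem of deciding whether a given $k$-regular sequence $(f(n))_{n\ge 0}$ has a zero term, which is undecidable by \cite[Theorem~5.2]{Allouche-Shallit:1992:regular-sequences}. Given such $(f(n))_{n\ge 0}$, define
\begin{equation*}
  g(n) = 1 + f(0)^2 + f(1)^2 + \cdots + f(n-1)^2,
\end{equation*}
which is exactly the sequence used in the square case of Theorem~\ref{thm:undecide-squares}. It is $k$-regular by the same closure arguments (squaring, sum, partial sums), and it is nondecreasing because each increment $g(n+1)-g(n) = f(n)^2$ is nonnegative.

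Next I would argue the two directions. If $f(n)\neq 0$ for all $n$, then $g(n+1)-g(n) = f(n)^2 > 0$, so $g$ is strictly increasing; but any nontrivial palindromic factor $g(i)g(i+1)\cdots g(i+\ell-1)$ with $\ell\geq 2$ must satisfy $g(i)=g(i+\ell-1)$, which is impossible in a strictly increasing sequence. Hence $(g(n))_{n\ge 0}$ contains no nontrivial palindrome. Conversely, if $f(n_0)=0$ for some $n_0$, then $g(n_0)=g(n_0+1)$, so the factor $g(n_0)g(n_0+1)$ is a nontrivial palindrome of length~$2$.

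Thus deciding whether $(g(n))_{n\ge 0}$ contains a nontrivial palindrome would decide whether $(f(n))_{n\ge 0}$ has a zero term, yielding the desired contradiction. There is really no technical obstacle here beyond the routine verification that the construction from Theorem~\ref{thm:undecide-squares} is $k$-regular (already established) and the observation that a palindrome of length $\geq 2$ forces $g(i)=g(j)$ for some $i<j$, which a strictly increasing sequence cannot supply.
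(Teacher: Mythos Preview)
Your proposal is correct and is essentially the same argument the paper gives: the paper simply states that ``the same proof given for squares above works unchanged,'' and you have spelled out exactly that proof, together with the one extra observation that a nontrivial palindrome in a nondecreasing sequence forces two equal terms. There is nothing to add.
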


\begin{proof}
The same proof given for squares above works unchanged.
\end{proof}

\renewcommand{\MR}[1]{}
\bibliography{abbrevs,krenn}
\bibliographystyle{amsplainurl}

\end{document}